\theoremstyle{plain}
\newcommand{\IC}{\mathbb{C}}
\newcommand{\eeq}{\mathrel{\mathop:}=}
\newtheorem{theorem}{Theorem}[section]
\newtheorem*{theorem*}{Theorem}
\newtheorem{proposition}[theorem]{Proposition}
\newtheorem*{proposition*}{Proposition}
\newtheorem*{lemma*}{Lemma}
\newtheorem{corollary}[theorem]{Corollary}
\newtheorem{lemma}[theorem]{Lemma}
\newtheorem{question}{Question}
\theoremstyle{definition}
\theoremstyle{remark}
\newtheorem{remark}[theorem]{Remark}
\title[Vortices in the plane]{On Some Configurations of Oppositely Charged Trapped Vortices in the Plane}
\author[E Dufresne]{Emilie Dufresne}
\address{Department of Mathematics, University of York, York, YO10 5DD, UK}
\email{emilie.dufresne@york.ac.uk}
\author[HA Harrington]{Heather A Harrington}
\address{Mathematical Institute, University of Oxford, Andrew Wiles Building, Radcliffe Observatory Quarter, Woodstock Road, Oxford, OX2 6GG}
\email{harrington@maths.ox.ac.uk}
\author[PG Kevrekidis]{Panayotis G Kevrekidis}
\address{Department of Mathematics and Statistics, University of Massachusetts Amherst, Amherst, MA 01003-4515, USA}
\email{kevrekid@math.umass.edu
}
\author[P Tripoli]{Paolo Tripoli}
\address{School of Mathematical Sciences, University of Nottingham, University Park, Nottingham, NG7 2RD}
\email{paotripoli@gmail.com}
\date{\today}
\keywords{Bose-Einstein  condensates, standing wave vortex configurations in the plane, symbolic computational methods, invariant theory.}
\subjclass[2010]{81U30}
\begin{document}

\begin{abstract}
  Our aim in the present work is to identify all the possible standing
  wave configurations involving few vortices of different charges
  in an atomic Bose-Einstein condensate (BEC). In this effort, we deploy
  the use of a computational algebra approach in order to identify
  stationary multi-vortex states with up to 6 vortices. The
  use of invariants  and symmetries enables deducing a set of
  equations in
  elementary symmetric polynomials, which can then be fully solved via
  computational algebra packages within Maple. We retrieve a number
  of previously identified configurations, including collinear ones
  and polygonal (e.g. quadrupolar and hexagonal) ones. However,
  importantly, we also retrieve a configuration with $4$ positive
  charges and $2$ negative ones which is unprecedented, to the best
  of our knowledge, in BEC studies. We corroborate these predictions
  via numerical computations in the fully two-dimensional PDE system
  of the Gross-Pitaevskii type which characterizes the BEC at the
  mean-field level.
\end{abstract}

\maketitle

%
%

	\section{Introduction}
	
        The explosion of interest in the theme of atomic Bose-Einstein  condensates (BECs) \cite{becbook1,becbook2,siambook} has had  significant implications in the study of associated nonlinear
        coherent structures, including vortices, as well as
        vortex lines and vortex rings~\cite{pismen}. In particular,
        settings involving the emergence and precessional dynamics
        of one or few vortices (see e.g.~\cite{And2000,Hal2001.PRL86.2922,Bre2003.PRL90.100403,NeelyEtAl10,dshall,dshall1,dshall3} for
        some typical examples), as well as the exploration of
        higher charged vortices and their potential decay to
        lower charged ones (see, e.g.,~\cite{S2Ket2,Iso2007.PRL99.200403})
        have been topics that garnered considerable interest within
        the atomic and nonlinear communities and motivated
        numerous associated experiments. Vortical patterns not
        only in two but also in higher dimensions (e.g., filaments in
        the form of lines and rings in 3d) were also produced by
        means of dynamical instabilities such as the extensively
        studied transverse
        instability~\cite{And2001.PRL86.2926,Dut2001.Sci293.663,zwierlein_ring}.
        The topic of one~\cite{shuang} and few vortices (possibly
        of different signs~\cite{chops,chops1,simula1,simula2})
        remains under active experimental investigation still to this day.

        The study of BECs at the mean-field level at (and very close to) zero
        temperatures is well established to be described by the
        famous Gross-Pitaesvkii (GP) partial differential equation
        (PDE)~\cite{becbook1,becbook2,siambook}. When rewriting
        the equation for the complex wavefunction into a system of
        equations for the density $\rho$ and velocity $v=\nabla \phi$
        (where $\phi$ is the phase of the complex field) one obtains a system
        strongly reminiscent of the Euler equations in fluid dynamics.
        For a recent, detailed account of this connection, see,
        e.g.,~\cite{bradley}. The role of the quantum features arises
        through the so-called quantum pressure term. Nevertheless,
        this analogy can be utilized to approximate the vortex
        dynamics and interactions within the GP system by those of
        point vortices in the fluid setting; for a recent discussion
        of how to utilize configurations of the latter to prove
        the existence of steady or co-traveling states in the former,
        see, e.g.,~\cite{wei}. There is a time-honored history
        of connections between the theory of different types
        of polynomials and the study of vortices in
        fluids~\cite{kadtke,aref1,aref2}. Recent years have seen
        an attempt to extend such considerations to the more complex
        (in that it bears an external trapping potential) setting
        of BECs, including extensions of relevant multi-vortex
        configurations~\cite{barry1} and associated polynomial generating
        function techniques~\cite{barry2}. 
         It is along these lines of associating the equations for the vortex positions (and their conjugates) with a system of polynomial equations and using symbolic algebraic techniques to tackle the latter that we proceed in the present study.

         Solving polynomial equations is central to algebraic geometry. Over
         the past century many approaches have been developed, and with them,
         many algorithms.  Fifty years ago, Buchberger proposed the Gr\"obner
         basis algorithm for solving polynomial systems \cite{Buchberger},
         which was improved by Faug\`ere \cite{F5},
         and
        is now the  central tool in computational algebraic geometry (see, for example, \cite{CoxLittleOShea}). Gr\"obner bases are  implemented in several packages, from general symbolic software such as Mathematica \cite{Mathematica} and Maple \cite{Maple}, to specialised software such as Magma \cite{Magma}, Singular \cite{Singular} and Macaulay2 \cite{M2}. Despite the computational complexity of Gr\"obner bases being double exponential in the worst case, there has been much success in using Gr\"obner basis techniques to solve problems in a wide range of applications including shear flow, nonlinear mechanics, chemical reactions, dynamical systems, statistics, systems biology and computer vision, among many others \cite{pausch,liu,drton,gatermann,sadeghimanesh,laubenbacher,harrington2012,aholt}.
 
Our aim here is to use such methods to \emph{fully} characterize all possible solutions of the system of vortex equations for stationary configurations involving up to 5 vortices and offer all the computationally tractable solutions for 6 vortices.
There are two natural avenues for obtaining polynomial equations to describe the vortex positions.
We either consider the conjugate variables of the vortex positions
or separate the equations in terms of real and imaginary parts. However, solving these systems using standard Gr\"obner basis approaches is computationally out of reach (see Table \ref{table:benchmark}). The next step is then
to attempt to exploit any symmetries in the equations. The system of equations for the vortex positions (see Equation \ref{Eq:Zeroth}) is symmetric in the variables, that is, it is invariant under the action of a permutation group. However individual equations in the system are not invariant. If the system was given by algebraic equations, then theoretically there is a system consisting of invariant equations which has the same solutions and can be rewritten in terms of a basic set of invariants (see \cite{SturmfelsIT, DerksenKemper}). However using the standard algorithms in invariant theory is, again, computationally intractable for the vortex equations.

While all the general approaches fail for the vortex problem, one can tailor ideas from computational algebra and invariant theory to gain new insight about vortex configurations. For example, Faug\`ere and Svartz \cite{FaugSvar}
proposed a general approach for systems with an action of the permutation group and they specialised their approach to find configurations of up to 8 single charged vortices. 
Here we extend this analysis to the equally relevant
case of vortices with opposite charges.
While vortex configurations of the same charge are rigidly rotating, those
with opposite charges can be genuinely stationary~\cite{siambook}.
For the latter, we specifically first construct a family of symmetric equations in the variables and their conjugates whose set of common zeros includes the solutions to the ODE approximation of the vortex problem. Next, from these equations we deduce a set of equations in the elementary symmetric polynomials in the variables. The common zero set of these equations, again, includes the solutions to the ODE approximation of the vortex problem. 
Finally we solve the system in the elementary symmetric polynomials and we convert the solutions into the original variables. 
Practical considerations presently prohibit an exhaustive study of larger
(than $6$, as considered herein) numbers of vortices, although this is an interesting
and practically important topic for further study.
Once we have these solutions, our final step is to come full circle
and check the existence of such configurations in the original PDE
problem. This is explored in some detail
(and at different levels of approximation) in Section~\ref{Section:BacktoPDE}.

Our paper is organised as follows. In Section \ref{Section:VorticesToAlgebra} we explain how to translate the underlying PDE to polynomial systems. The theoretical results and proofs of the two charged vortex problem using algebraic approaches is presented in Section \ref{Section:ExploitingSymmetries}. In Section \ref{Section:Results} we give the results from the computational algebra for the configurations of vortices and the benchmarking of the various approaches, showing that the direct approaches fail. In Section~\ref{Section:BacktoPDE}, we show that numerical solutions of the PDE corroborate the presently identified
solution, as well as all the
ones for smaller vortex numbers.
Finally, in the last section, we give conclusions and suggest some directions
for future work.

%
%

	\section{From vortices to algebra}\label{Section:VorticesToAlgebra}

        The original PDE problem involves identifying standing
wave solutions of the two-dimensional Gross--Pitaevskii equation of the
form~\cite{becbook1,becbook2}:
\begin{eqnarray}
i u_{t} = - \frac{1}{2} \Delta u + |u|^2 u + V u,
\label{GPE}
\end{eqnarray}
where $\Delta$ represents the two-dimensional Laplacian and the 
external trapping potential $V(r)$ is assumed to be parabolic, i.e., $V(r)=\frac{1}{2}
\Omega^2 (r)$, where $r=x^2+y^2$ and $\Omega$ is the trap. 
Notice that this dimensionless version of the GP equation
has been obtained through well-established reductions of
the dimensional one, as detailed, e.g., in~\cite{siambook}.
We use solutions of the form $u(x,y,t)=e^{-i \mu t}
U(x,y)$ with chemical potential
$\mu$ and subsequently solve the nonlinear steady state problem for
$U(x,y)$ via a fixed point iteration.

When identifying solutions bearing vortices, we can attempt
to capture the effective dynamics of the vortices
through the following ordinary differential equations:
\begin{eqnarray}
i \dot{z}_j=S_j \omega_{pr} z_j + \sum_{1\leq k\neq j \leq n} \frac{S_k}{\bar{z}_j - 
\bar{z}_k.}
\label{aux1}
\end{eqnarray}
Here $S_j$ are the charges of the vortices, $\omega_{pr}$ is their
precession frequency  inside the parabolic trap, $i$ is $\sqrt{-1}$ and the complex number
$z_j$ represents the planar position
of the $j$-the vortex. Near the center of the trap it is reasonable
to assume that vortices have a nearly constant precession frequency
$\omega_{pr}$.
This has been described, e.g., recently in~\cite{shuang}.
The second term captures the inter-vortex interactions of the $j$-th
vortex with all the other vortices (summed over $k$). Notice that
this is a velocity-induced interaction, i.e., each vortex induces
a velocity field at the location of all the others, as is the
case for vortices in inviscid, incompressible fluid (point) vortices. 
Notice that this description is progressively more accurate for
larger values of the chemical potential $\mu$, whereby the vortices
approach the limit of point vortices with (decreasing
width and thus) no internal structure.

However, the ground state of the system, i.e., the background over
which the vortices are located decreases in its density as one
moves radially outwards, due to the presence of
the parabolically confining external potential.
As has been extensively examined since early on~\cite{fetter}
(see also the recent discussion of~\cite{shuang}),
this has an implication of radially increasing
the precession frequency according to
\begin{eqnarray}
  \omega_{pr}(|z_j|) = \frac{\omega_{pr}(0)}{1-\frac{V(|z_j|)}{\mu}}.
  \label{aux2}
\end{eqnarray}
In turn, this leads to the amended version of the equations
of motion as:
\begin{eqnarray}
i \dot{z}_j= S_j \frac{\omega_{pr}(0)}{1-\frac{V(|z_j|)}{\mu}} z_j
+ \sum_{1\leq k\neq j \leq n} \frac{S_k}{\bar{z}_j-\bar{z_k}}.
\label{aux3}
\end{eqnarray}

However, it turns out that the interactions between the vortices
are also affected by the presence of the external potential.
In particular, the interactions between the vortices as characterized
in Equation~(\ref{aux1}) assume the presence of {\it a homogeneous background}
in which the vortices move. A spatially inhomogeneous background,
present in the case of the trap, modifies (i.e., {\it screens}) the
inter-vortex interactions in a way that has been recently captured,
e.g., in~\cite{ourpra2}. This, in turn, as discussed in this work (see
also references therein),
can be captured by a modulating factor, leading to a further revised
form of the equations. The latter account for the inhomogeneous
background in both the individual vortex precession and
in the inter-vortex interactions and read:
\begin{eqnarray}
  i \dot{z}_j= S_j \frac{\omega_{pr}(0)}{1-\frac{V(|z_j|)}{\mu}} z_j
  + \sum_{1\leq k\neq j \leq n} \frac{V(|z_j|)}{V(|z_k|)} \frac{S_k}{\bar{z}_j-\bar{z_k}}.
  \label{aux4}
\end{eqnarray}

Our aim in the present work is to explore equilibrium
multi-vortex configurations involving vortices
of both charges, i.e., with $S_j=\pm 1$. In particular,
we will more specifically assume that we have
$M$ vortices with $S_j=1$ and $N$ vortices with $S_j=-1$.
We will denote the former with the positions 
$x_1, \ldots, x_M$ in the complex plane $\IC$
and the latter with the positions $y_1, \ldots, y_N$
in the same plane. Our analysis will take place at
the level of the simplest
equation for the vortices, namely Equation~(\ref{aux1}).
However, in Section~\ref{Section:BacktoPDE}, we will illustrate
the connections of this setting with the full original
problem, as well as the more elaborate (and
more accurate) variations of the form of Equation~(\ref{aux3})
and (\ref{aux4}). At the level of Equation~(\ref{aux1}),
splitting the equations for positive and negative
charges, according to the symbolism above, we obtain
the steady state formulation:

\begin{eqnarray}\label{Eq:Zeroth}
  \left\{
  \begin{aligned}
	\overline{x_i} = - \sum_{\substack{j \neq i \\ j=1}}^M \frac{1}{x_i-x_j} + \sum _{j=1}^N \frac{1}{x_i-y_j}, \ \ \ i = 1,\ldots M,\\
	\overline{y_i} = - \sum_{\substack{j \neq i \\ j=1}}^N \frac{1}{y_i-y_j} + \sum _{j=1}^M \frac{1}{y_i-x_j}, \ \ \ i = 1,\ldots N.
  \end{aligned}
  \right.
\end{eqnarray}

Note that the solutions of this system present many symmetries:

	\begin{lemma}[Symmetries of the vortex equations]\label{lemma:symmetries}
The set of solutions to the system~\eqref{Eq:Zeroth} is invariant under up to four different group actions.
\begin{itemize}
	\item The product of symmetric groups $S_M\times S_N$ acts by permuting the variables via $(\sigma_1, \sigma_2)\cdot x_i = x_{\sigma_1(i)}$, $(\sigma_1, \sigma_2)\cdot y_j = y_{\sigma_2(j)}$. 

	\item The group of complex numbers of modulus $1$ acts via $\lambda\cdot x_i=\lambda x_i$ and $\lambda\cdot y_i=\lambda y_i$. This corresponds to rotations of the complex plane around the origin.
	\item The cyclic group of order 2 acts via conjugation. This correspond to reflection of the complex plane with respect to the real axis.
	\item When $M=N$, the cyclic group of order 2 acts by exchanging $x_i$ and $y_i$.
\end{itemize}
\end{lemma}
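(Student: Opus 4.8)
The plan is to verify each of the four claimed symmetries by directly substituting the transformed variables into the system~\eqref{Eq:Zeroth} and checking that the equations are again satisfied, using only elementary manipulations. The only genuinely structural observation needed is that the right-hand sides of~\eqref{Eq:Zeroth} are built out of differences $x_i - x_j$, $x_i - y_j$, $y_i - y_j$, together with the operation of complex conjugation on the left, so the behaviour of these building blocks under each transformation controls everything.

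First I would treat the permutation action of $S_M \times S_N$. Given a solution $(x_1,\dots,x_M,y_1,\dots,y_N)$ and $(\sigma_1,\sigma_2) \in S_M \times S_N$, set $x_i' = x_{\sigma_1(i)}$ and $y_j' = y_{\sigma_2(j)}$. Writing the $i$-th equation of the transformed system, its right-hand side is $-\sum_{j\neq i} \tfrac{1}{x'_i - x'_j} + \sum_j \tfrac{1}{x'_i - y'_j}$; reindexing the two sums by $k = \sigma_1(j)$ and $k=\sigma_2(j)$ respectively (a bijection of the index sets), this becomes exactly the right-hand side of the $\sigma_1(i)$-th equation of the original system, which equals $\overline{x_{\sigma_1(i)}} = \overline{x'_i}$. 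The negative-charge equations are handled identically, and the invariance follows. Second, for the rotation action by $\lambda$ with $|\lambda|=1$, note that each difference scales as $\lambda x_i - \lambda x_j = \lambda(x_i-x_j)$, so every summand on the right-hand side is multiplied by $\lambda^{-1} = \bar\lambda$ (using $|\lambda|=1$), while the left-hand side $\overline{\lambda x_i} = \bar\lambda\, \overline{x_i}$ also picks up exactly the factor $\bar\lambda$; cancelling $\bar\lambda$ recovers the original equation. Third, for conjugation, replace every variable by its conjugate: since $\overline{a} - \overline{b} = \overline{a-b}$ and $\overline{1/w} = 1/\overline{w}$, conjugating both sides of each equation of~\eqref{Eq:Zeroth} turns it into the corresponding equation in the conjugated variables, so the conjugated tuple is again a solution. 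Fourth, when $M=N$, swapping $x_i \leftrightarrow y_i$ sends the first block of equations to the second block and vice versa, up to an overall sign on the right-hand side; but this sign is absorbed because interchanging the roles of the two summations flips the sign of the ``self-interaction'' term relative to the ``cross-interaction'' term in precisely the way the two blocks of~\eqref{Eq:Zeroth} already differ, so the system as a whole is preserved.

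There is no serious obstacle here: the statement is essentially a bookkeeping verification, and each case reduces to a one-line identity about how differences and reciprocals transform. The only point demanding a little care is the fourth symmetry, where one must check the signs match up correctly after the swap, and the third, where one must be sure that conjugating the defining relation genuinely yields the defining relation for the conjugated tuple rather than something weaker; both are immediate once written out. I would present the argument compactly, doing the permutation and rotation cases in full and indicating that conjugation and the $M=N$ exchange follow by the same substitution-and-reindex technique.
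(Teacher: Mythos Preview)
Your proposal is correct and matches the paper's approach exactly: the paper's proof consists of the single sentence ``This is verified via straightforward computations,'' and you have carried out precisely those computations. One small remark: your treatment of the fourth symmetry is more convoluted than necessary, since swapping $x_i\leftrightarrow y_i$ (when $M=N$) sends the first block of~\eqref{Eq:Zeroth} \emph{directly} to the second and vice versa with no sign discrepancy to absorb.
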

\begin{proof}
This is verified via straightforward computations.
\end{proof}


%
%

\subsection{Obtaining algebraic equations: direct approaches } \label{Section:direct}

System~\eqref{Eq:Zeroth} fails to be algebraic because of the presence of the conjugation operator. In the next two subsections, we discuss two direct approaches for converting this problem into an algebraic problem, which provides the possibility to use algebraic methods. Both approaches end up doubling the number of variables and equations. We show in Section~\ref{Section:Results} that these approaches cannot succeed in their practical implementation, unless
the resulting system is not suitably reduced.
\subsubsection{Conjugate variables}

 A first method of converting system~\eqref{Eq:Zeroth} to an algebraic system is to introduce a new sets of variables $X_1,\ldots X_M, Y_1,\ldots Y_N$ representing the complex conjugates of the variables
 $x_1,\ldots x_M, y_1,\ldots y_N$. Therefore we obtain the following equations:

\begin{align*}
	X_i = - \sum_{\substack{j \neq i \\ j=1}}^M \frac{1}{x_i-x_j} + \sum _{j=1}^N \frac{1}{x_i-y_j}, \ \ \ i = 1,\ldots M, \tag{$E^{(x)}_i$}\label{Eq:Exi} \\
	Y_i = - \sum_{\substack{j \neq i \\ j=1}}^N \frac{1}{y_i-y_j} + \sum _{j=1}^M \frac{1}{y_i-x_j}, \ \ \ i = 1,\ldots N, \tag{$E^{(y)}_i$}\label{Eq:Eyi} \\
	x_i = - \sum_{\substack{j \neq i \\ j=1}}^M \frac{1}{X_i-X_j} + \sum _{j=1}^N \frac{1}{X_i-Y_j}, \ \ \ i = 1,\ldots M, \tag{$\bar{E}^{(x)}_i$}\label{Eq:BExi} \\
	y_i = - \sum_{\substack{j \neq i \\ j=1}}^N \frac{1}{Y_i-Y_j} + \sum _{j=1}^M \frac{1}{Y_i-X_j}, \ \ \ i = 1,\ldots N. \tag{$\bar{E}^{(y)}_i$}\label{Eq:BEyi}
\end{align*}

We now have $2(M+N)$ equations in $2(M+N)$ variables. 

\begin{remark}\label{rmk:symmetriesConj}
Each of the symmetries described in Lemma \ref{lemma:symmetries} extend to the solutions of this new system naturally.
\end{remark}

Polynomial equations for this system are obtained from these equations by clearing the denominators.
 Let us denote by $D$ the discriminant $D = \prod_{i \neq j} (x_i-x_j) \cdot \prod_{i, j} (x_i-y_j) \cdot \prod_{i \neq j} (y_i-y_j)$ and by $\bar{D}$ the discriminant in the conjugate variables $\bar{D} = \prod_{i \neq j} (X_i-X_j) \cdot \prod_{i, j} (X_i-Y_j) \cdot \prod_{i \neq j} (Y_i-Y_j)$.
 Since we cleared the denominators, we need to exclude solutions where $D$ and $\bar{D}$ are zero. This is done by introducing two new variables $h$ and $H$ and by adding to the system the equations $h \cdot D-1=0$ and $H\cdot \bar{D} -1=0$.
 Finally, to reduce the dimension of the set of solutions, we need to remove the symmetry given by the multiplicative action of complex numbers of modulus $1$ mentioned in Remark \ref{rmk:symmetriesConj}. To do so, we impose the condition that some non-zero variable of the system is real.
We do this in the following way:
\begin{enumerate}
\item we subdivide the system into two subsystems corresponding to the cases $x_1 = 0$ and $x_1 \neq 0$. In particular, the first subsystem is obtained by adding the new equation $x_1=0$, the second subsystem is obtained by adding a new variable $a$ and the new equation $a\cdot x_1 - 1=0$,   
	\item we add the new equation $y_1-Y_1=0$ to the subsystem where $x_1=0$ (i.e. we assume $y_1\neq x_1=0$ is real) and we add the new equation $x_1-X_1=0$ to the subsystem where $x_1 \neq 0$ (i.e. we assume $x_1\neq 0$ is real).
\end{enumerate}
The two systems so obtained can be solved using standard algebraic geometric tools and in the cases considered yield finitely many solutions. To get the solutions of system~\eqref{Eq:Zeroth} we now just need to select those solutions where $x_i = \bar{X}_i$ and $y_i = \bar{Y}_i$.

\subsubsection{Real and imaginary parts}

An alternative approach to describe system~\eqref{Eq:Zeroth} as a polynomial system consists in doubling up the number of equations and variables by considering real and imaginary parts separately. We introduce $2M+2N$ variables $a_1, \ldots, a_M, b_1, \ldots, b_M, c_1, \ldots, c_N, d_1, \ldots d_N$. We substitute the variables in system~\eqref{Eq:Zeroth} via $x_i =  a_i + \sqrt{-1} b_i$ and $y_i =  c_i + \sqrt{-1} d_i$ and we clear the denominators. Separating the real and
imaginary parts of the resulting $M+N$ equations, we get a system of $2M+2N$ real polynomial equations, two for each equation of system~\eqref{Eq:Zeroth}. Finding solutions to system~\eqref{Eq:Zeroth} is now equivalent to finding real solutions to this new system. The discriminant $D$ can be written as $D = \prod_{i \neq j} (a_i + \sqrt{-1} b_i - a_j - \sqrt{-1}b_j) \cdot \prod_{i, j} (a_i + \sqrt{-1}b_i - c_j - \sqrt{-1}d_j) \cdot \prod_{i \neq j} (c_i + \sqrt{-1}d_i - c_j - \sqrt{-1}d_j)$.
The discriminant $D$ can be written as a sum $D = D_{\text{re}} + \sqrt{-1} D_{\text{im}}$, where $ D_{\text{re}}$ and $D_{\text{im}}$ are real polynomials.
We now add to the systems an extra variable $h$ and the equation $h\cdot  D_{\text{re}} \cdot D_{\text{im}} -1=0$ to ensure that the real and the imaginary part of $D$ are not both zero. Here, the multiplicative action of complex numbers of modulus $1$ becomes the group of rotations around the origin in the real plane. We get rid of this 1-dimensional symmetry by requiring that one of $x_1=a_1+\sqrt{-1}b_1$ or $y_1=c_1+\sqrt{-1}d_1$ is purely imaginary:
\begin{enumerate}
	\item we subdivide the system into two subsystems corresponding to the cases $a_1 + \sqrt{-1} b_1 = 0$ and $a_1 + \sqrt{-1} b_1 \neq 0$. In particular, the first subsystem is obtained by adding the new equations $a_1=0$ and $b_1=0$, the second subsystem is obtained by adding a new variable $k$ and the new equation $k\cdot a_1 \cdot b_1 - 1=0$, 
	\item we add the new equation $c_1=0$ to the subsystem where $a_1 + \sqrt{-1} b_1 = 0$ and we add the new equation $a_1=0$ to the subsystem where  $a_1 + \sqrt{-1} b_1 \neq 0$.
\end{enumerate}

 Real solutions of this system can now be obtained either by finding all complex solutions with symbolic methods and then picking out the real solutions, or by approximation with numerical algebraic geometry (see Section \ref{Section:Results}).

%
%

\section{Obtaining algebraic equations: exploiting the symmetries}\label{Section:ExploitingSymmetries}

In this section we exploit the invariance of system~\eqref{Eq:OriginalSystem} under the action of $S_M\times S_N$ to obtain a new set of equations. The equations we obtain are not simpler to the eye, far from it, but will prove to be better for symbolic computations (See Section \ref{Section:Results}).

In the rest of this section we explain how to exploit the symmetries described in Lemma  \ref{lemma:symmetries} and Remark \ref{rmk:symmetriesConj} in order to go further with computations than what can be achieved with the direct approaches. We focus on the action of the product of symmetric groups. Our starting point is the conjugate variables system. Invariant theory of finite groups suggests
that since the system is invariant there exists a set of invariant equations which have the same set of common zeros \cite{SturmfelsIT}. Furthermore these invariant equations can be written as polynomials in a finite generating set of the polynomial invariants \cite{DerksenKemper}. The idea is then to solve for the value of these generators, with the hope that this computation is more feasible than the direct computation. There are two main problems with this plan. First, the symbolic methods for performing this ``symmetrization'' and ``rewriting'' rely on Gr\"obner bases and so are quickly computationally intractable.
Second, the set of generating invariants for the action on the variables $x_i,X_i,y_i,Y_i$ is complicated. If we consider only the action on the variables $x_i,y_i$, then a generating set is simple enough, indeed one may take the elementary symmetric polynomials $e^{(x)}_1, \ldots,  e^{(x)}_M$ in the variables $x_1, \ldots, x_M$ and the  elementary symmetric polynomials $e^{(y)}_1, \ldots,  e^{(y)}_N$  in the variables $y_1, \ldots, y_N$, given by $e^{(x)}_k:= \sum_{I\subseteq [M], |I|=k}\prod_{i\in I} x_i$ and $e^{(y)}_k:= \sum_{I\subseteq [N], |I|=k}\prod_{i\in I} y_i$, respectively.

We take an indirect approach inpired by the methodology
utilized by Faug\`ere and Svartz towards solving for the (rigidly rotating)
configurations
of vortices of a single charge \cite{FaugSvar}. The first step is to construct a set of invariant equations, written in terms of invariants in the variables $x_i,X_i,y_i,Y_i$, whose set of common zeros includes the solutions to System \eqref{Eq:Zeroth} (i.e. those common zeros of the conjugate variable system such that $X_i=\overline{x_i}$ and $Y_i=\overline{y_i}$). From these, we then deduce symmetric equations in the $x_i,y_i$ whose set of common zeros includes
the solutions to System \eqref{Eq:Zeroth}. The rewriting is done at the same time. The section ends with an explanation of our solution procedure.


			\subsection{Invariant equations}\label{SSec:InvEq}

                        In this section we introduce some
                     invariant equations that are satisfied by the solutions to the vortex problem. 

To start, we provide a useful compact form of the conjugate system. We set $P(z) = \prod_{i = 1}^M (z-x_i)$ and $Q(z) = \prod_{i = 1}^N (z-y_i)$.
We have the following Lemma.

\begin{lemma}
The system $\{E^{(x)}_i, \bar{E}^{(x)}_i, E^{(y)}_j, \bar{E}^{(y)}_j \mid i=1, \ldots, M, j=1\ldots N\} $ is equivalent to:
\begin{eqnarray}\label{Eq:OriginalSystem}
  \left\{
  \begin{aligned}
	X_i =  - \frac{P''(x_i)}{2P'(x_i)} + \frac{Q'(x_i)}{Q(x_i)},\ \ x_i =  - \frac{P''(X_i)}{2P'(X_i)} + \frac{Q'(X_i)}{Q(X_i)}, \ \ \ i = 1,\ldots M,\\
	Y_i = - \frac{Q''(y_i)}{2Q'(y_i)} + \frac{P'(y_i)}{P(y_i)},\ \ \ y_i = - \frac{Q''(Y_i)}{2Q'(Y_i)} + \frac{P'(Y_i)}{P(Y_i)}, \ \ \ i = 1,\ldots N.
  \end{aligned}
  \right.
\end{eqnarray}
\proof
The proof of \cite[Lemma~1]{FaugSvar} shows that $\sum_{\substack{j\neq i \\ j=1}}^M \frac{1}{x_i-x_j} = \frac{P''(x_i)}{2P'(x_i)}$. 
Combining with $\frac{Q'(z)}{Q(z)} = \sum_{j=1}^N \frac{1}{z-y_j}$ we get that equation $E^{(x)}_i$ can be rewritten as $X_i =  - \frac{P''(x_i)}{2P'(x_i)} + \frac{Q'(x_i)}{Q(x_i)}$. Similar computations hold for the other equations. 
\endproof
\end{lemma}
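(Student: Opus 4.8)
The plan is to prove this by establishing the two key partial-fraction identities separately and then assembling them. The first identity, $\sum_{j\neq i}\frac{1}{x_i-x_j} = \frac{P''(x_i)}{2P'(x_i)}$, is exactly what is invoked from \cite[Lemma~1]{FaugSvar}, so I would simply cite it; for completeness one could note that it follows by logarithmic differentiation of $P(z)=\prod_{k}(z-x_k)$, giving $\frac{P'(z)}{P(z)}=\sum_k \frac{1}{z-x_k}$, differentiating once more to get $\frac{P''(z)P(z)-P'(z)^2}{P(z)^2}=-\sum_k\frac{1}{(z-x_k)^2}$, and then carefully taking the limit $z\to x_i$ after isolating the singular term, which produces $\frac{P''(x_i)}{2P'(x_i)}$ since $P'(x_i)=\prod_{k\neq i}(x_i-x_k)$.

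Next I would record the second, easier identity: since $Q(z)=\prod_{j=1}^N(z-y_j)$ has distinct roots (we are on the locus $D\neq 0$, so in particular no $x_i$ equals any $y_j$ and $Q(x_i)\neq 0$), logarithmic differentiation gives directly $\frac{Q'(z)}{Q(z)}=\sum_{j=1}^N\frac{1}{z-y_j}$, valid at $z=x_i$. Substituting both identities into $E^{(x)}_i$, namely $X_i = -\sum_{j\neq i}\frac{1}{x_i-x_j}+\sum_{j=1}^N\frac{1}{x_i-y_j}$, yields $X_i = -\frac{P''(x_i)}{2P'(x_i)}+\frac{Q'(x_i)}{Q(x_i)}$. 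The remaining three families of equations follow by the identical argument with the roles suitably swapped: for $\bar E^{(x)}_i$ one uses that $X_1,\dots,X_M$ are the roots of $P(z)$ evaluated... — more precisely one applies the same two identities but now evaluated at $z=X_i$ and with $\bar D\neq 0$ guaranteeing $P'(X_i)\neq 0$ and $Q(X_i)\neq 0$; for $E^{(y)}_j$ and $\bar E^{(y)}_j$ one swaps the roles of $P$ and $Q$ (and of $x$ and $y$, $X$ and $Y$). I would then note that each substitution is reversible wherever the denominators $P'(x_i), Q(x_i), P'(X_i), Q(X_i)$ and their $y$-analogues are nonzero, which is precisely the regime in which both systems are defined, so the systems are genuinely \emph{equivalent} and not merely one-way implications.

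The only real subtlety — and the step I would be most careful about — is the limiting computation extracting the factor of $2$ in $\frac{P''(x_i)}{2P'(x_i)}$, i.e. making sure that the self-interaction term is handled correctly rather than producing a spurious singularity. This is exactly why the paper defers it to \cite[Lemma~1]{FaugSvar}: writing $P(z)=(z-x_i)R(z)$ with $R(z)=\prod_{k\neq i}(z-x_k)$, one has $P'(z)=R(z)+(z-x_i)R'(z)$ and $P''(z)=2R'(z)+(z-x_i)R''(z)$, so $P'(x_i)=R(x_i)$ and $P''(x_i)=2R'(x_i)$, whence $\frac{P''(x_i)}{2P'(x_i)}=\frac{R'(x_i)}{R(x_i)}=\sum_{k\neq i}\frac{1}{x_i-x_k}$ by logarithmic differentiation of $R$. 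Everything else is bookkeeping: there are four symmetric cases, all identical up to renaming, so I would prove one in detail and assert the rest "by the same computation, exchanging the roles of $P$ and $Q$."
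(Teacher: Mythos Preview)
Your proof is correct and follows exactly the paper's approach: invoke the identity $\sum_{j\neq i}\frac{1}{x_i-x_j}=\frac{P''(x_i)}{2P'(x_i)}$ from \cite{FaugSvar}, combine it with the logarithmic derivative $\frac{Q'(z)}{Q(z)}=\sum_j\frac{1}{z-y_j}$, substitute into $E^{(x)}_i$, and defer the remaining three families to the same computation with the roles swapped; your self-contained derivation of the factor $2$ via the factorization $P(z)=(z-x_i)R(z)$ is a helpful addition that the paper omits. One small clarification worth making explicit where you hesitated: for $\bar E^{(x)}_i$ the polynomial $P$ has roots $x_j$, not $X_j$, so the right-hand side $\frac{P''(X_i)}{2P'(X_i)}$ in the lemma's statement is tacitly shorthand for the analogous expression built from $\bar P(z)=\prod_j(z-X_j)$ and $\bar Q(z)=\prod_j(z-Y_j)$---your trailing ``\ldots'' correctly flags this notational abuse, and saying so openly is cleaner than the phrase ``evaluated at $z=X_i$'', which as written does not give the needed sum over $X_j$'s.
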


We denote by $s^{(x)}_k$ the Newton sum $s^{(x)}_k = \sum_{i=1}^M x_i^k$, and we define $r^{(x)}_k = \sum_{i=1}^M x_i^kX_i$. Similarly,  we denote $s^{(y)}_k = \sum_{i=1}^M y_i^k$, and $r^{(y)}_k = \sum_{i=1}^M y_i^kY_i$.

\begin{theorem}\label{thm:invariant1}
For every $k\geq 0$, the solutions to System \eqref{Eq:Zeroth} satisfy the equation
\begin{align*}
r_k&^{(x)} +r_k^{(y)} =\\
&  - \frac{1}{2} \left( \sum_{i=0}^{k-1}s_i^{(x)}s_{k-i-1}^{(x)}\right) + \frac{k}{2} s_{k-1}^{(x)} - \frac{1}{2} \left( \sum_{i=0}^{k-1}s_i^{(y)}s_{k-i-1}^{(y)}\right) + \frac{k}{2} s_{k-1}^{(y)} + \left( \sum_{i=0}^{k-1}s_{i}^{(x)}s_{k-i-1}^{(y)}\right).
\end{align*}
\end{theorem}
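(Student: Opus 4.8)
The plan is to multiply each equation of System~\eqref{Eq:OriginalSystem} by a suitable power of the variable and sum, turning the logarithmic-derivative expressions into Newton-sum identities. Concretely, take the equation $X_i = -\frac{P''(x_i)}{2P'(x_i)} + \frac{Q'(x_i)}{Q(x_i)}$, multiply both sides by $x_i^k$, and sum over $i=1,\dots,M$; similarly multiply $Y_i = -\frac{Q''(y_i)}{2Q'(y_i)} + \frac{P'(y_i)}{P(y_i)}$ by $y_i^k$ and sum over $i=1,\dots,N$; finally add the two resulting identities. The left-hand side is then exactly $r_k^{(x)} + r_k^{(y)}$ by definition of $r_k^{(x)}, r_k^{(y)}$. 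It remains to identify the right-hand side.

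The key computational input is the partial-fraction form $\frac{P''(x_i)}{2P'(x_i)} = \sum_{j\neq i} \frac{1}{x_i - x_j}$ (from the cited \cite[Lemma~1]{FaugSvar}) and $\frac{Q'(x_i)}{Q(x_i)} = \sum_{j=1}^N \frac{1}{x_i - y_j}$. So I need to evaluate three types of sums: $\sum_{i}\sum_{j\neq i} \frac{x_i^k}{x_i-x_j}$ (the $P''/P'$ term, appearing for both the $x$'s and the $y$'s), and $\sum_i \sum_j \frac{x_i^k}{x_i - y_j}$ (the cross term, and its mirror $\sum_i\sum_j \frac{y_i^k}{y_i-x_j}$). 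For the first kind, the standard symmetrization trick is to write $\sum_{i\neq j}\frac{x_i^k}{x_i-x_j} = \frac12\sum_{i\neq j}\frac{x_i^k - x_j^k}{x_i-x_j}$, and since $\frac{x_i^k - x_j^k}{x_i - x_j} = \sum_{a=0}^{k-1} x_i^a x_j^{k-1-a}$, this becomes $\frac12 \sum_{a=0}^{k-1}\big(\sum_{i}x_i^a\big)\big(\sum_j x_j^{k-1-a}\big) - (\text{diagonal correction})$. The diagonal $i=j$ contributes $\sum_a \sum_i x_i^{k-1} = k\, s_{k-1}^{(x)}$, which must be subtracted off; dividing by $2$ as in the symmetrization gives the $-\tfrac{k}{2}s_{k-1}^{(x)}$ that, after the overall sign $-\tfrac12$ in front of $P''/(2P')$, produces the $+\tfrac{k}{2}s_{k-1}^{(x)}$ in the claimed formula, and likewise $-\tfrac12\sum_{i=0}^{k-1}s_i^{(x)}s_{k-i-1}^{(x)}$. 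The $Q''/Q'$ term applied to the $y$'s gives the analogous $-\tfrac12\sum s_i^{(y)}s_{k-i-1}^{(y)} + \tfrac{k}{2}s_{k-1}^{(y)}$.

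For the cross terms, the symmetry $x_i \leftrightarrow y_j$ is across two different index sets, so the ``antisymmetrize'' trick applies to the \emph{sum} of the two cross contributions: $\sum_{i,j}\frac{x_i^k}{x_i-y_j} + \sum_{i,j}\frac{y_j^k}{y_j - x_i} = \sum_{i,j}\frac{x_i^k - y_j^k}{x_i - y_j} = \sum_{a=0}^{k-1}\big(\sum_i x_i^a\big)\big(\sum_j y_j^{k-1-a}\big) = \sum_{a=0}^{k-1} s_a^{(x)} s_{k-1-a}^{(y)}$, with no diagonal correction since the index sets are disjoint. Since $\frac{Q'(x_i)}{Q(x_i)}$ enters the $X_i$ equation with a $+$ sign and $\frac{P'(y_i)}{P(y_i)}$ enters the $Y_i$ equation with a $+$ sign, this is exactly the $+\big(\sum_{i=0}^{k-1}s_i^{(x)}s_{k-i-1}^{(y)}\big)$ term in the statement. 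Assembling all four pieces with their signs yields the claimed identity.

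The proof is essentially a bookkeeping exercise; the main thing to get right is the diagonal correction in the self-interaction sums and keeping the signs straight (the leading $-\tfrac12$ on $P''/(2P')$ versus the $+1$ on $Q'/Q$). One should also note implicitly that the manipulations are valid precisely on solutions of System~\eqref{Eq:Zeroth}, where the denominators $P'(x_i)$, $Q(x_i)$, etc.\ are nonzero, so that System~\eqref{Eq:OriginalSystem} holds; the resulting polynomial identity in Newton sums then holds on that solution set. A small sanity check at $k=0$ (where the right side collapses to $MN - \binom{M}{2} - \binom{N}{2}$-type combinatorics, matching $r_0^{(x)}+r_0^{(y)} = \sum X_i + \sum Y_i$) and at $k=1$ is worth recording to catch sign errors.
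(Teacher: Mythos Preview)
Your proposal is correct and follows essentially the same route as the paper: multiply the $i$-th equation by $x_i^k$ (respectively $y_i^k$), sum, and add the two results; evaluate the self-interaction term via the symmetrization identity and the cross terms via $\tfrac{x_i^k-y_j^k}{x_i-y_j}=\sum_{a=0}^{k-1}x_i^a y_j^{k-1-a}$. The only difference is that the paper invokes \cite[Theorem~4]{FaugSvar} for $-\sum_i \tfrac{x_i^k P''(x_i)}{2P'(x_i)}=-\tfrac12\sum_{a=0}^{k-1}s_a^{(x)}s_{k-1-a}^{(x)}+\tfrac{k}{2}s_{k-1}^{(x)}$, whereas you rederive it by hand with the diagonal correction; and the paper treats $k=0$ by a separate direct computation (your parenthetical sanity check for $k=0$ is slightly off --- the right-hand side is simply $0$ since all sums are empty, not an ``$MN-\binom{M}{2}-\binom{N}{2}$''-type count --- but this does not affect the argument).
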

\proof
We have 

\begin{equation}
r_k^{(x)} = \sum_{i=1}^M x_i^kX_i = -\sum_{i=1}^M \frac{x_i^kP''(x_i)}{2P'(x_i)} + \sum_{i=1}^M \frac{x_i^k Q'(x_i)}{Q(x_i)}.
\end{equation}

By the proof of \cite[Theorem~4]{FaugSvar} we have $-\sum_{i=1}^M \frac{x_i^kP''(x_i)}{2P'(x_i)} = - \frac{1}{2} \left(\sum_{i=0}^{k-1}s_i^{(x)}s_{k-i-1}^{(x)}\right) + \frac{k}{2} s_{k-1}^{(x)}$. As a  result,

\begin{equation}\label{Eq:rkx}
r_k^{(x)} =  - \frac{1}{2} \left(\sum_{i=0}^{k-1}s_i^{(x)}s_{k-i-1}^{(x)}\right) + \frac{k}{2} s_{k-1}^{(x)} + \sum_{i=1}^M \sum_{j=1}^N \frac{x_i^k}{x_i-y_j}.
\end{equation}

Similarly we have

\begin{equation}\label{Eq:rky}
r_k^{(y)} = - \frac{1}{2} \left(\sum_{i=0}^{k-1}s_i^{(y)}s_{k-i-1}^{(y)}\right) + \frac{k}{2} s_{k-1}^{(y)} + \sum_{i=1}^M \sum_{j=1}^N \frac{-y_j^k}{x_i-y_j}.
\end{equation}

If we sum Equation~\eqref{Eq:rkx} and Equation~\eqref{Eq:rky} we obtain

\begin{align}\label{Eq:rky0}
r_k&^{(x)} +r_k^{(y)} = \\
\nonumber & - \frac{1}{2} \left( \sum_{i=0}^{k-1}s_i^{(x)}s_{k-i-1}^{(x)} \right) + \frac{k}{2} s_{k-1}^{(x)} - \frac{1}{2} \left( \sum_{i=0}^{k-1}s_i^{(y)}s_{k-i-1}^{(y)} \right) + \frac{k}{2} s_{k-1}^{(y)} + \sum_{i=1}^M \sum_{j=1}^N \frac{x_i^k-y_j^k}{x_i-y_j}.
\end{align}

Let us assume $k\geq 1$. We have $\frac{x_i^k-y_j^k}{x_i-y_j} = \sum_{m=0}^{k-1} x_i^my_j^{k-m-1}$.
It follows that 

\begin{equation}\label{Eq:rky1}
\sum_{i=1}^M \sum_{j=1}^N \frac{x_i^k-y_j^k}{x_i-y_j} = \sum_{i=1}^M \sum_{j=1}^N \sum_{m=0}^{k-1} x_i^my_j^{k-m-1} =\sum_{m=0}^{k-1}s_{m}^{(x)}s_{k-m-1}^{(y)}.
\end{equation}

Therefore, for $k\geq 1$, the statement follows immediately from Equation~\eqref{Eq:rky0} and Equation~\eqref{Eq:rky1}.

Finally, we consider the case $k=0$. We have

\begin{align}
r_0^{(x)} &= \sum_{i=1}^M X_i = - \sum_{i=1}^M \sum_{\substack{j=1 \\ j\neq 1}}^M \frac{1}{x_i-x_j} + \sum_{i=1}^N \sum_{j=1}^M \frac{1}{x_i-y_j} \\
 & =  - \sum_{\substack{i,j=1 \\ i<j}}^M \frac{1}{x_i-x_j}  -  \sum_{\substack{i,j=1 \\ i<j}}^M \frac{1}{x_j-x_i} +\sum_{i=1}^N \sum_{j=1}^M \frac{1}{x_i-y_j}  = \sum_{i=1}^N \sum_{j=1}^M \frac{1}{x_i-y_j}, \\
r_0^{(y)} &= \sum_{i=1}^N \sum_{j=1}^M \frac{1}{y_j-x_i},
\end{align}

and therefore $r_0^{(x)} + r_0^{(y)} = 0$ as desired.
\endproof


				\subsection{Equations in the elementary symmetric functions}\label{SSec:InvFun}

In Theorem~\ref{thm:invariant1} we introduced a set of invariant equations to describe the vortex problem. For practical uses, the presence of $r_k^{(x)}$ and $r_k^{(y)}$ in these equations produces two disadvantages:
\begin{enumerate}
	\item they involve both the variables $x_i$'s, $y_j$'s and their conjugates $X_i$'s, $Y_j$'s,
	\item there is no easy formula to express $r_k^{(x)}$ in terms of the elementary symmetric functions in the $x_i$'s and $X_i$'s.
\end{enumerate}
We remind the reader that, on the other hand, there exist well known formulas to express the Newton sums $s^{(x)}_k$'s in terms of the elementary symmetric functions $e^{(x)}_k$'s.
In this Section we obtain a new set of invariant equations for the vortex problem that avoid these issues. 

 Given a polynomial $F\in K[e_1^{(x)}, \ldots e_M^{(x)}, e_1^{(y)}\ldots e_N^{(y)}][z]$ in the symmetric polynomials $e^{(x)}_i$'s and $e^{(y)}_i$'s
and in the extra variable $z$, we wish to express in a compact way the sum $\sum_{i=1}^M F(x_i) \pm \sum_{j=1}^N F(y_j)$. To do so, we consider the transformations $\mathscr{S^+}$ and $\mathscr{S^-}$ defined by

\begin{equation*}
\begin{array}{ccccc}
\mathscr{S^+}: & K[e_1^{(x)}, \ldots e_M^{(x)}, e_1^{(y)}\ldots e_N^{(y)}][z] & \rightarrow & K[e_1^{(x)}, \ldots e_M^{(x)}, e_1^{(y)}\ldots e_N^{(y)}] \\
 & \sum a_k z^k & \mapsto & \sum a_k (s_k^{(x)} + s_k^{(y)}),
\end{array}
\end{equation*}
and
\begin{equation*}
\begin{array}{ccccc}
\mathscr{S^-}: & K[e_1^{(x)}, \ldots e_M^{(x)}, e_1^{(y)}\ldots e_N^{(y)}][z] & \rightarrow & K[e_1^{(x)}, \ldots e_M^{(x)}, e_1^{(y)}\ldots e_N^{(y)}] \\
 & \sum a_k z^k & \mapsto & \sum a_k (s_k^{(x)} - s_k^{(y)}),
\end{array}
\end{equation*}

where the $a_k$'s are polynomials not involving the variable $z$.
In other words, $\mathscr{S^\pm}$ acts by expanding $F$ in the variable $z$ and then replacing the power  $z^k$ with the expression $(s_k^{(x)} \pm s_k^{(y)})$, which, we remind, can be expressed in terms of the symmetric polynomials $e^{(x)}_i$'s and $e^{(y)}_i$'s.

Let $F,G \in K[e_1^{(x)}, \ldots e_M^{(x)}, e_1^{(y)}\ldots e_N^{(y)}][z]$ be two polynomials, and write them as $F = \sum f_kz^k$ and $G = \sum g_kz^k$. Then, their sum $F+G$ can be written as $F+G = \sum (f_k+g_k)z^k$. It follows that  $\mathscr{S^\pm}(F+G) = \mathscr{S^\pm}(F) + \mathscr{S^\pm}(G)$. 
Similarly, given $F \in K[e_1^{(x)}, \ldots e_M^{(x)}, e_1^{(y)}\ldots e_N^{(y)}][z]$ and a polynomial $h \in K[e_1^{(x)}, \ldots e_M^{(x)}, e_1^{(y)}\ldots e_N^{(y)}]$ not involving the variable $z$, we have $\mathscr{S^\pm}(hF) = h \mathscr{S^\pm}(G)$.
However, in general, for $F,G \in K[e_1^{(x)}, \ldots e_M^{(x)}, e_1^{(y)}\ldots e_N^{(y)}][z]$, we have  $\mathscr{S^\pm}(FG) \neq \mathscr{S^\pm}(F) \mathscr{S^\pm}(G)$.
These properties sum up to say, in the language of commutative algebra, that $\mathscr{S^+}$ and $\mathscr{S^+}$ are morphisms of $K[e_1^{(x)}, \ldots e_M^{(x)}, e_1^{(y)}\ldots e_N^{(y)}]$-modules, but not homorphisms of rings.
As mentioned above, the transformations $\mathscr{S^+}$ and $\mathscr{S^-}$ allow to write in a convenient way sums of the type $\sum_{i=1}^M F(x_i) \pm \sum_{j=1}^N F(y_j)$. 

\begin{lemma}
For every $F \in K[e_1^{(x)}, \ldots e_M^{(x)}, e_1^{(y)}\ldots e_N^{(y)}][z]$ we have 
\begin{equation*}
	\mathscr{S}^\pm(F) = \sum_{i=1}^M F(x_i) \pm \sum_{j=1}^N F(y_j).
\end{equation*}
\proof
We write $F(z) = \sum_k a_k z^k$. We have 
\begin{multline}
\mathscr{S}^\pm(F) = \sum_k a_k s^{(x)}_k \pm \sum_k a_k s^{(y)}_k = \sum_k \sum_{i=1}^M a_k x_i^k \pm \sum_k \sum_{j=1}^N a_k y_j^k  \\ 
 = \sum_{i=1}^M \sum_k a_k x_i^k \pm \sum_{j=1}^N \sum_k  a_k y_j^k =  \sum_{i=1}^M F(x_i) \pm \sum_{j=1}^N F(y_j).
\end{multline}
\endproof
\end{lemma}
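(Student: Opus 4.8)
The plan is simply to unwind the two definitions and interchange finite sums. First I would write $F$ out explicitly in the variable $z$ as $F(z) = \sum_k a_k z^k$, where each coefficient $a_k$ lies in $K[e_1^{(x)}, \ldots, e_M^{(x)}, e_1^{(y)}, \ldots, e_N^{(y)}]$ and, in particular, involves no $z$. By the very definition of $\mathscr{S}^\pm$ this gives $\mathscr{S}^\pm(F) = \sum_k a_k\,(s_k^{(x)} \pm s_k^{(y)})$, so it suffices to show that the right-hand side $\sum_{i=1}^M F(x_i) \pm \sum_{j=1}^N F(y_j)$ equals the same thing.

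Next I would compute that side directly. Substituting $z = x_i$ into $F$ leaves the coefficients $a_k$ untouched, so $\sum_{i=1}^M F(x_i) = \sum_{i=1}^M \sum_k a_k x_i^k = \sum_k a_k \sum_{i=1}^M x_i^k = \sum_k a_k s_k^{(x)}$, using the definition $s_k^{(x)} = \sum_{i=1}^M x_i^k$ and that both sums are finite. In exactly the same way $\sum_{j=1}^N F(y_j) = \sum_k a_k s_k^{(y)}$. Forming $\sum_{i=1}^M F(x_i) \pm \sum_{j=1}^N F(y_j) = \sum_k a_k\,(s_k^{(x)} \pm s_k^{(y)})$ and comparing with the first paragraph closes the argument.

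There is no real obstacle here; the only point that deserves a moment's attention is conceptual rather than technical, namely that the $a_k$ are to be treated as scalars (elements of $K[e_1^{(x)}, \ldots, e_M^{(x)}, e_1^{(y)}, \ldots, e_N^{(y)}]$) throughout, and remain unchanged both under the substitution $z \mapsto x_i$, under $z \mapsto y_j$, and under application of $\mathscr{S}^\pm$. This is precisely the content of the module-morphism properties of $\mathscr{S}^\pm$ recorded just above the lemma, so the interchange of summations and the ``pulling out'' of the $a_k$ are all justified. Hence the identity holds for every $F$.
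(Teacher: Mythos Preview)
Your proof is correct and follows exactly the same approach as the paper's: write $F(z)=\sum_k a_k z^k$, apply the definition of $\mathscr{S}^\pm$, expand the Newton sums, and swap the finite summations to recover $\sum_i F(x_i)\pm\sum_j F(y_j)$. The only difference is cosmetic---you compute both sides and compare, whereas the paper runs a single chain of equalities from left to right.
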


\begin{corollary}
For every $F \in K[e_1^{(x)}, \ldots e_M^{(x)}, e_1^{(y)}\ldots e_N^{(y)}][z]$ we have $\mathscr{S}^\pm(FPQ) = 0$.
\proof
We have $\mathscr{S}^\pm(FPQ) = \sum_{i=1}^M F(x_i)P(x_i)Q(x_i) \pm \sum_{j=1}^N F(y_j)Q(y_j)P(y_j) = 0$,
since $P(x_i)=Q(y_j)=0$ for every $i=1,\ldots, M$ and $j=1,\ldots,N$.
\endproof
\end{corollary}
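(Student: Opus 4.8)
The plan is to deduce this directly from the preceding Lemma, the only real content being the verification that $FPQ$ still lies in the source ring $K[e_1^{(x)}, \ldots, e_M^{(x)}, e_1^{(y)}, \ldots, e_N^{(y)}][z]$ on which $\mathscr{S}^\pm$ is defined. First I would invoke Vieta's formulas: expanding $P(z) = \prod_{i=1}^M(z-x_i)$ gives $P(z) = \sum_{k=0}^M (-1)^k e_k^{(x)} z^{M-k}$, and similarly $Q(z) = \sum_{k=0}^N (-1)^k e_k^{(y)} z^{N-k}$, so the coefficients of $P$ and $Q$ as polynomials in $z$ are, up to sign, the generators $e_k^{(x)}$ and $e_k^{(y)}$. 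Hence $P \in K[e_1^{(x)}, \ldots, e_M^{(x)}][z]$ and $Q \in K[e_1^{(y)}, \ldots, e_N^{(y)}][z]$, and therefore the product $FPQ$ again has all of its $z$-coefficients in $K[e_1^{(x)}, \ldots, e_M^{(x)}, e_1^{(y)}, \ldots, e_N^{(y)}]$, i.e.\ it is a legitimate element to feed into $\mathscr{S}^\pm$.

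With this observation the Lemma applies verbatim to $FPQ$, giving
\[
\mathscr{S}^\pm(FPQ) = \sum_{i=1}^M (FPQ)(x_i) \pm \sum_{j=1}^N (FPQ)(y_j) = \sum_{i=1}^M F(x_i)P(x_i)Q(x_i) \pm \sum_{j=1}^N F(y_j)P(y_j)Q(y_j).
\]
Then I would simply note that $P(x_i) = \prod_{\ell=1}^M (x_i - x_\ell) = 0$, since the factor indexed by $\ell = i$ vanishes, so every term of the first sum is zero; symmetrically $Q(y_j) = 0$ for each $j$, so the second sum vanishes as well. This gives $\mathscr{S}^\pm(FPQ) = 0$.

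There is essentially no obstacle here: the statement is a one-line consequence of the Lemma once one records the (purely formal) point that $P$ and $Q$ have coefficients expressible in the elementary symmetric functions, which is the reason they can be substituted into $\mathscr{S}^\pm$ at all. The real force of the Corollary is the observation that evaluating $P$ at the $x_i$ or $Q$ at the $y_j$ kills each summand, so that any multiple of $PQ$ automatically lies in the kernel of $\mathscr{S}^\pm$; this is precisely what will make $\mathscr{S}^\pm$ a convenient device for producing equations in the $e_k^{(x)}$'s and $e_k^{(y)}$'s in the sequel.
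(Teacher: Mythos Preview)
Your proof is correct and follows exactly the same approach as the paper's: apply the preceding Lemma to $FPQ$ and then use $P(x_i)=Q(y_j)=0$. Your added observation that $P$ and $Q$ have coefficients in the elementary symmetric polynomials (so that $FPQ$ lies in the domain of $\mathscr{S}^\pm$) is a worthwhile clarification that the paper leaves implicit.
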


We are now ready to write expressions for  $S^{(x)}_k$, $S^{(y)}_k$, $R^{(x)}_k$ and $R^{(y)}_k$ in terms of $\mathscr{S}^+$ and $\mathscr{S}^-$. 

Let $D$ be the resultant of $PQ$ and $(PQ)'$. By definition the \emph{resultant of the polynomials $PQ$ and $(PQ)'$ in one variable $z$} is a polynomial in their coefficients that vanishes if and only if $PQ$ and $(PQ)'$ have a
common root. In particular, it is an element of $K[e_1^{(x)}, \ldots e_M^{(x)}, e_1^{(y)}\ldots e_N^{(y)}]$, and it can be explicitly computed as the determinant of the Sylvester matrix. 

\begin{remark}
The resultant of $PQ$ and $(PQ)'$ is the expression in terms of the elementary symmetric polynomials of the discriminant introduced in Section \ref{Section:VorticesToAlgebra}. Indeed, $PQ$ and $(PQ)'$ have a common root if and only if $PQ$ has a double root, which happens only if two among $x_1, \ldots, x_M, y_1,\ldots, y_N$ coincide.
\end{remark}

The discriminant $D$
also satisfies the equation
\begin{align}
B (z) P(z)Q(z) &+ C(z) (P(z)Q(z))' \\ 
\nonumber &= B(z) P(z)Q(z) + C(z) (P'(z)Q(z) + P(z)Q'(z)) = D,
\end{align}
for some $B, C \in K[e_1^{(x)}, \ldots e_M^{(x)}, e_1^{(y)}\ldots e_N^{(y)}][z]$.

We then have
\begin{equation}
S^{(x)}_k = \sum_{i=1}^M X_i^k = \sum_{i=1}^M \left(\frac{-P''(x_i)Q(x_i) + 2P'(x_i)Q'(x_i)}{2P'(x_i)Q(x_i)}\right)^k. 
\end{equation}
Denote 
\begin{equation}
A(z) \eeq \frac{1}{2} C (z)( -P''(z)Q(z) + 2P'(z)Q'(z) - P(z)Q''(z)).
\end{equation}
We have 
\begin{equation}
S^{(x)}_k = \sum_{i=1}^M \left( \frac{A(x_i)}{C(x_i)P'(x_i)Q(x_i)} \right)^k = \sum_{i=1}^M \left( \frac{A(x_i)}{C(x_i)(P'(x_i)Q(x_i)+P(x_i)Q'(x_i))} \right)^k, 
\end{equation}
where we are using on both numerator and denominator the fact that $P(x_i) = 0$ for $i=1, \ldots, M$.

Similarly, for $S^{(y)}$ we get the expression
\begin{equation}
S^{(y)}_k = \sum_{j=1}^N \left( \frac{A(y_j)}{C(y_j)P(y_j)Q'(y_j)} \right)^k = \sum_{j=1}^N \left( \frac{A(y_j)}{C(y_j)(P'(y_j)Q(y_j)+P(y_j)Q'(y_j))} \right)^k.
\end{equation}

So, now

\begin{align}
S^{(x)}_k \pm S^{(y)}_k &= 
\sum_{i=1}^M \left( \frac{A(x_i)}{D - B(x_i) P(x_i) Q(x_i)} \right)^k \pm\sum_{j=1}^N \left( \frac{A(y_j)}{D - B(y_j) P(y_j) Q(y_j)} \right)^k  \\
\nonumber &=\sum_{i=1}^M \left( \frac{A(x_i)}{D} \right)^k \pm \sum_{j=1}^N \left( \frac{A(y_j)}{D} \right)^k  = \frac{1}{D^k} \left(\sum_{i=1}^M A(x_i)^k \pm \sum_{j=1}^N A(y_j)^k \right) \\
 \nonumber &= 
\frac{1}{D^k} \mathscr{S}^\pm(A^k(z)) .
\end{align}

The same computation for $R_k^{(x)} + R_k^{(y)}$ yields

\begin{equation*}
R^{(x)}_k + R^{(y)}_k = 
\frac{1}{D^k} \left(\sum_{i=1}^M x_iA(x_i)^k + \sum_{j=1}^N y_jA(y_j)^k \right) = 
\frac{1}{D^k} \mathscr{S}^+(zA^k(z)) .
\end{equation*}

We are now ready to state the main theorem of this section. 

\begin{theorem}\label{thm:InvSystem}
The solutions to the vortex problem satisfy, for every $k\geq 0$,
\begin{equation*}
\frac{1}{D} \mathscr{S}^+(zA^k(z)) = - \frac{1}{2} \sum_{i=0}^{k-1} \left( \mathscr{S}^-(A^i(z))\right)\left( \mathscr{S}^-(A^{k-i-1}(z))\right) + \frac{k}{2} \mathscr{S}^+(A^{k-1}(z)).
\end{equation*}
\proof
We rearrange the conjugate equation of Theorem~\ref{thm:invariant1}. 
\begin{equation}\label{Eq:ConjT21}
R_k^{(x)} +R_k^{(y)} =  
- \frac{1}{2} \sum_{i=0}^{k-1} (S_i^{(x)}S_{k-i-1}^{(x)} + S_i^{(y)}S_{k-i-1}^{(y)}) 
+ \sum_{i=0}^{k-1}S_{i}^{(x)}S_{k-i-1}^{(y)} 
+ \frac{k}{2}  (S_{k-1}^{(y)}+ S_{k-1}^{(x)}).
\end{equation}
We can also write 
\begin{multline*}
\sum_{i=0}^{k-1}S_{i}^{(x)}S_{k-i-1}^{(y)} =  \frac{1}{2}\left( \sum_{i=0}^{k-1}S_{i}^{(x)}S_{k-i-1}^{(y)} + \sum_{i=0}^{k-1}S_{i}^{(x)}S_{k-i-1}^{(y)} \right) =\\
 \frac{1}{2} \left( \sum_{i=0}^{k-1}S_{i}^{(x)}S_{k-i-1}^{(y)} + \sum_{i=0}^{k-1}S_{k-i-1}^{(x)}S_{i}^{(y)} \right) = \frac{1}{2} \sum_{i=0}^{k-1}(S_{i}^{(x)}S_{k-i-1}^{(y)} + S_{k-i-1}^{(x)}S_{i}^{(y)}).
\end{multline*}
This allows to rewrite the Equation~\eqref{Eq:ConjT21} as 
\begin{equation*}
R_k^{(x)} +R_k^{(y)} =  
- \frac{1}{2} \sum_{i=0}^{k-1} (S_i^{(x)}S_{k-i-1}^{(x)} + S_i^{(y)}S_{k-i-1}^{(y)} - S_{i}^{(x)}S_{k-i-1}^{(y)} - S_{k-i-1}^{(x)}S_{i}^{(y)}) 
+ \frac{k}{2}  (S_{k-1}^{(y)}+ S_{k-1}^{(x)}).
\end{equation*}
We have $R_k^{(x)} +R_k^{(y)}  = \frac{1}{D^k} \mathscr{S}^+(zA^k(z))$ and $S^{(x)}_{k-1} + S^{(y)}_{k-1} = \frac{1}{D^{k-1}} \mathscr{S}^+(A^{k-1}(z))$. 
Moreover we have 
\begin{multline*}
S_i^{(x)}S_{k-i-1}^{(x)} + S_i^{(y)}S_{k-i-1}^{(y)} - S_{i}^{(x)}S_{k-i-1}^{(y)} - S_{k-i-1}^{(x)}S_{i}^{(y)} = (S_i^{(x)} - S_i^{(y)}) (S_{k-i-1}^{(x)}  - S_{k-i-1}^{(y)}) \\ =\left(\frac{1}{D^i} \mathscr{S}^-(A^i(z))\right)\left(\frac{1}{D^{k-i-1}} \mathscr{S}^-(A^{k-i-1}(z))\right).
\end{multline*}
Equation~\eqref{Eq:ConjT21} now can be written as
\begin{align}\label{eqn:thm2.4}
 \frac{1}{D^k} \mathscr{S}^+&(zA^k(z)) \\ 
 \nonumber&=  
- \frac{1}{2 D^{k-1}} \sum_{i=0}^{k-1} (\left(\mathscr{S}^-(A^i(z))\right)\left( \mathscr{S}^-(A^{k-i-1}(z))\right)) 
+ \frac{k}{2D^{k-1}}  (  \mathscr{S}^+(A^{k-1}(z))),
\end{align}
and the statement follows at once.
\endproof
\end{theorem}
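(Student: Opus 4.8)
The plan is to start from the conjugate-variable equation of Theorem~\ref{thm:invariant1}, written with the capitalized symbols $S_k^{(x)}, S_k^{(y)}, R_k^{(x)}, R_k^{(y)}$ in place of their lowercase counterparts, and then substitute the closed forms for these quantities in terms of the operators $\mathscr{S}^\pm$ and the auxiliary polynomial $A(z)$ that were derived just above the statement. So the first step is to record the conjugate version of Theorem~\ref{thm:invariant1}, namely Equation~\eqref{Eq:ConjT21}, which is obtained by applying conjugation (a symmetry of the system, by Lemma~\ref{lemma:symmetries} and Remark~\ref{rmk:symmetriesConj}) to the original identity.

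The second step is a purely algebraic manipulation of the right-hand side: I would symmetrize the cross-term $\sum_{i=0}^{k-1} S_i^{(x)} S_{k-i-1}^{(y)}$ by averaging it with the same sum under the reindexing $i \mapsto k-i-1$, which turns it into $\tfrac12\sum_{i=0}^{k-1}(S_i^{(x)}S_{k-i-1}^{(y)} + S_{k-i-1}^{(x)}S_i^{(y)})$. Combining this with the two ``diagonal'' sums lets me collapse the entire quadratic part into $-\tfrac12\sum_{i=0}^{k-1}(S_i^{(x)}-S_i^{(y)})(S_{k-i-1}^{(x)}-S_{k-i-1}^{(y)})$, which is precisely the shape that the operator $\mathscr{S}^-$ produces.

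The third step is substitution. Using the identities established before the theorem statement --- $R_k^{(x)}+R_k^{(y)} = D^{-k}\mathscr{S}^+(zA^k(z))$, $S_{k-1}^{(x)}+S_{k-1}^{(y)} = D^{-(k-1)}\mathscr{S}^+(A^{k-1}(z))$, and $S_i^{(x)}-S_i^{(y)} = D^{-i}\mathscr{S}^-(A^i(z))$ --- each term of the rewritten equation becomes a power of $1/D$ times an $\mathscr{S}^\pm$-expression; crucially the product $\tfrac{1}{D^i}\cdot\tfrac{1}{D^{k-i-1}} = \tfrac{1}{D^{k-1}}$, so every term on the right carries the same factor $D^{-(k-1)}$ while the left carries $D^{-k}$. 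Clearing $D^{-(k-1)}$ then yields exactly the claimed identity with a single $1/D$ on the left. I should also separately check that the degenerate cases $k=0$ and $k=1$ (where some sums are empty or where $A^{k-1}$ needs interpreting) are consistent, appealing to the $k=0$ case already handled in Theorem~\ref{thm:invariant1}.

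I do not expect a serious obstacle here: the content of the theorem is essentially a change of notation on top of Theorem~\ref{thm:invariant1}, with the two nontrivial ingredients --- the symmetrization of the cross term and the closed forms for $S_k^{(x)}\pm S_k^{(y)}$ and $R_k^{(x)}+R_k^{(y)}$ --- already in hand. The one point requiring a little care is bookkeeping the powers of $D$ so that they cancel uniformly to leave precisely one factor of $1/D$; getting the index arithmetic $i + (k-i-1) = k-1$ right in every summand is the only place an error could creep in.
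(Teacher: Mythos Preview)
Your proposal is correct and follows essentially the same approach as the paper's proof: start from the conjugate of Theorem~\ref{thm:invariant1}, symmetrize the cross term via the reindexing $i\mapsto k-i-1$ to factor the quadratic part as $(S_i^{(x)}-S_i^{(y)})(S_{k-i-1}^{(x)}-S_{k-i-1}^{(y)})$, and then substitute the $\mathscr{S}^\pm$ identities derived just before the statement, tracking the powers of $D$. Your additional remark about verifying the degenerate cases $k=0,1$ is a nice bit of care that the paper defers to Remark~\ref{rmk:InvSystem}.
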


\begin{remark}\label{rmk:InvSystem}
For $k=0$ it reduces to $e_1^{(x)}+e_1^{(y)}=0$, for $k=1$ it reduces to $0=0$.
\end{remark}


\subsection{From solutions to the invariant system to solutions of (\ref{Eq:Zeroth})}

In this section we explain how to obtain solutions for the System (\ref{Eq:Zeroth}) using the symmetric system.

The first step will be to solve the invariant system obtained in Theorem \ref{thm:InvSystem}. In this case we will be aiming for complete, exact solutions. The main idea will be for each $k\geq 0$ to write
\begin{equation*}
h_k:=\frac{1}{D} \mathscr{S}^+(zA^k(z))  + \frac{1}{2} \sum_{i=0}^{k-1} \left( \mathscr{S}^-(A^i(z))\right)\left( \mathscr{S}^-(A^{k-i-1}(z))\right) - \frac{k}{2} \mathscr{S}^+(A^{k-1}(z)).
\end{equation*}
By construction the $h_k$'s are polynomial functions in the elementary symmetric polynomials $e_1^{(x)}, \ldots,e_M^{(x)}$ and $e_1^{(y)}, \ldots,e_N^{(y)}$. By Remark \ref{rmk:InvSystem}, we have $h_0=e_1^{(x)}+e_1^{(y)}$ and $h_1=0$. As the polynomial functions $e_1^{(x)}, \ldots,e_M^{(x)}$ and $e_1^{(y)}, \ldots,e_N^{(y)}$
are algebraically independent, we can think of them as coordinate functions on $\IC^{M+N}$. Our first step is then to find all points $p$ of $\IC^{M+N}$ which satisfy $h_k(p)=0$ for all $k$. As polynomial rings in finitely many variables are Noetherian, we know that there exists $K\geq 0$ such that the set of common zeros of $h_0,\ldots,h_K$ coincides with the set of common zeros of all $h_k$'s. Unfortunately, this finiteness result is not constructive, and it is very hard to determine a priori how many $h_k$'s are sufficient. Furthermore, the formulas for the $h_k$'s are rather complicated and each addition makes the symbolic computation less likely to be tractable. Accordingly we make a choice to consider only $h_0, h_2, \ldots, h_{M+N-1}$. This ensures that the number of equations is the same as the number of variables after eliminating $e^{(y)}_1$ via the relation $e^{(x)}_1+e^{(y)}_1=0$. The set of common zeros of these functions is potentially bigger than the set of common zeros of all the $h_k$'s, but we know this set will include the set of solutions to the System (\ref{Eq:Zeroth}).

\begin{proposition}
Equip $K[e_1^{(x)},\ldots,e_M^{(x)},e_1^{(y)},\ldots,e_N^{(y)}]$ with the structure of a graded ring by declaring $e_1^{(x)}$ and $e_j^{(y)}$ to have degree $i$ and $j$, respectfully. Then, for each $k\geq 0$ equation $h_k$ is homogeneous of degree $(k-1)\left({M \choose 2} + {N\choose 2}+2MN-1\right)$.
\end{proposition}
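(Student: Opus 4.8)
The plan is a weighted-degree argument. First I would observe that the grading in the statement is the restriction to $K[e_1^{(x)},\dots,e_M^{(x)},e_1^{(y)},\dots,e_N^{(y)}]$ of the standard grading on $K[x_1,\dots,x_M,y_1,\dots,y_N]$ in which every $x_i$ and every $y_j$ has degree $1$: indeed $e_k^{(x)}$ is a sum of squarefree degree-$k$ monomials in the $x_i$, hence homogeneous of degree $k$, and likewise $e_k^{(y)}$, while the $e$'s are algebraically independent (as already used in Section~\ref{SSec:InvFun}). I would then extend this grading to $K[e_1^{(x)},\dots,e_N^{(y)}][z]$ by declaring $\deg z=1$; then $P(z)=\prod_i(z-x_i)$ and $Q(z)=\prod_j(z-y_j)$ are homogeneous of degrees $M$ and $N$, so $P',P'',Q',Q''$ are homogeneous of degrees $M-1,M-2,N-1,N-2$; here, for $F=\sum_m a_m z^m$, homogeneity of degree $d$ means $\deg a_m=d-m$ whenever $a_m\ne 0$. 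Two elementary facts drive the rest: (i) $\mathscr{S}^{\pm}$ is degree-preserving, because $s^{(x)}_m$ and $s^{(y)}_m$ are homogeneous of degree $m$ (Newton's identities), hence so is $s^{(x)}_m\pm s^{(y)}_m$, so $F$ homogeneous of degree $d$ forces $\mathscr{S}^{\pm}(F)=\sum_m a_m(s^{(x)}_m\pm s^{(y)}_m)$ homogeneous of degree $d$ in $K[e_1^{(x)},\dots,e_N^{(y)}]$; and (ii) in the graded field of fractions, a quotient $f/g$ of nonzero homogeneous elements is homogeneous of degree $\deg f-\deg g$ (and it lands in $K[e_1^{(x)},\dots,e_N^{(y)}]$ with that same degree whenever $g\mid f$).

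Next I would track the degrees of $D$, $C$ and $A(z)$, setting $\delta:=\deg D$. Extracting the degree-$\delta$ homogeneous component of both sides of $B\,PQ+C\,(PQ)'=D$ — and using that $PQ$, $(PQ)'$, $D$ are homogeneous of degrees $M+N$, $M+N-1$, $\delta$ — one sees the Bézout cofactors $B$ and $C$ may be taken homogeneous of degrees $\delta-(M+N)$ and $\delta-(M+N-1)$ (this also follows from the Sylvester/cofactor formula for the resultant, whose entries are homogeneous). The bracket $-P''Q+2P'Q'-PQ''$ is a sum of three homogeneous terms of degree $M+N-2$, hence homogeneous of degree $M+N-2$, so $A(z)=\tfrac12 C(z)(-P''Q+2P'Q'-PQ'')$ is homogeneous of degree $(\delta-(M+N-1))+(M+N-2)=\delta-1$.

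Then I would assemble $h_k$. By multiplicativity $A^j(z)$ is homogeneous of degree $j(\delta-1)$ and $zA^k(z)$ of degree $k(\delta-1)+1$, so by (i) $\mathscr{S}^+(zA^k(z))$ is homogeneous of degree $k(\delta-1)+1$, while $\mathscr{S}^-(A^i(z))$ and $\mathscr{S}^+(A^{k-1}(z))$ are homogeneous of degrees $i(\delta-1)$ and $(k-1)(\delta-1)$. Consequently: the first term $\tfrac1D\mathscr{S}^+(zA^k(z))$ of $h_k$ is, by (ii), homogeneous of degree $k(\delta-1)+1-\delta=(k-1)(\delta-1)$; every product $\mathscr{S}^-(A^i(z))\,\mathscr{S}^-(A^{k-i-1}(z))$ in the second term is homogeneous of degree $i(\delta-1)+(k-i-1)(\delta-1)=(k-1)(\delta-1)$, independently of $i$; and the third term $\tfrac k2\mathscr{S}^+(A^{k-1}(z))$ is already homogeneous of degree $(k-1)(\delta-1)$. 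Being a sum of homogeneous elements of one and the same degree, $h_k$ is homogeneous of degree $(k-1)(\delta-1)$ — an honest homogeneous polynomial, since $h_k$ is polynomial in the elementary symmetric functions by the discussion preceding the Proposition. It then remains to compute $\delta=\deg D$ and check that $\delta-1$ equals $\binom{M}{2}+\binom{N}{2}+2MN-1$; this is a direct count of the weighted degree of the discriminant of Section~\ref{Section:VorticesToAlgebra} (e.g.\ via $D=\prod_{i=1}^M P'(x_i)Q(x_i)\cdot\prod_{j=1}^N P(y_j)Q'(y_j)$).

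The step I expect to be the main obstacle is precisely the bookkeeping around $D$: justifying cleanly that the Bézout cofactors $B,C$ in $B\,PQ+C\,(PQ)'=D$ can be taken homogeneous of the predicted degrees, and computing $\delta=\deg D$ exactly so that $\delta-1$ collapses to $\binom{M}{2}+\binom{N}{2}+2MN-1$. Everything else is linear bookkeeping of weighted degrees; in particular the identity for $h_k$ coming from Theorem~\ref{thm:InvSystem} is used only to read off its three terms, not their cancellation.
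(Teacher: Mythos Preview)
Your approach is correct but differs from the paper's. The paper does not track the degrees of $A$, $B$, $C$ at all. Instead it observes that equation~(\ref{eqn:thm2.4}) is literally the conjugate of the identity in Theorem~\ref{thm:invariant1}, and since each $X_i$, $Y_j$ is a homogeneous rational function of degree $-1$ in the $x$'s and $y$'s (directly from~\eqref{Eq:Exi}--\eqref{Eq:Eyi}), that conjugate identity is homogeneous of degree $1-k$. Then $h_k$ is $D^{k-1}$ times this, giving degree $(k-1)\delta+(1-k)=(k-1)(\delta-1)$ in one stroke. Your route instead stays inside $K[e_i^{(x)},e_j^{(y)}][z]$ and computes $\deg A=\delta-1$ from the B\'ezout relation, then checks that each of the three summands of $h_k$ separately has degree $(k-1)(\delta-1)$. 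What the paper's argument buys is brevity: it bypasses entirely the issue you flag as the main obstacle (homogeneity of the B\'ezout cofactors $B,C$), since it never needs $\deg C$ or $\deg A$. What your argument buys is self-containment: it works purely with the explicit formula for $h_k$ and the operator $\mathscr{S}^\pm$, without revisiting the conjugate variables or the degree~$-1$ observation. Both routes terminate at the same endpoint, namely the evaluation of $\delta=\deg D$.
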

\begin{proof}
  As noted in the proof of Theorem \ref{thm:InvSystem}, equation
  (\ref{eqn:thm2.4}) is simply a rewriting of the conjugate of the equation in Theorem~\ref{thm:invariant1}. As this equation is homogeneous of degree $k-1$ in the variables $x_1,\ldots, x_M,y_1,\ldots,y_N$, the conjugate is homogenous of degree $1-k$ (indeed, Equations \ref{Eq:Exi} and \ref{Eq:Exi} express $X_i$ and $Y_i$ as homogeneous rational functions of degree $-1$ in the variables $x_1,\ldots, x_M,y_1,\ldots,y_N$).
  Next, we note that the discriminant 
\[D=\prod_{i\neq j}(x_i-x_j)(-1)^{MN}\prod_{i,j}(x_i-y_j)\prod_{i\neq j}(y_i-y_j)\]
is homogeneous of degree ${M \choose 2}+{N \choose 2}+2MN$ in the variables $x_1,\ldots, x_M,y_1,\ldots,y_N$. It follows that the equation $h_k$ is homogeneous of degree $(k-1)\left({M \choose 2} + {N\choose 2}+2MN-1\right)$ in the variables $x_1,\ldots, x_M,y_1,\ldots,y_N$. The conclusion then follows since $h_k$ is invariant and the grading in the statement of the proposition is the grading obtained by taking the degree of the elementary symmetric polynomials in the variables $x_1,\ldots, x_M,y_1,\ldots,y_N$.
\end{proof}

\begin{remark}\label{remark:S1toCstar}
The fact that the polynomials $h_k$ are homogeneous means that if $(e_1^{(x)},\allowbreak \ldots,\allowbreak e_M^{(x)}, e_1^{(y)},\ldots,e_N^{(y)})$ is a common zero, then so is $(te_1^{(x)},\ldots,\allowbreak t^Me_M^{(x)},\allowbreak  te_1^{(y)},\allowbreak \ldots,t^Ne_N^{(y)})$ for every $t\in \IC$. Equivalently, if $(x_1,\ldots,x_M,y_1,\ldots,y_N)$ is a solution, so is $(tx_1,\allowbreak \ldots,tx_M,\allowbreak ty_1,\allowbreak \ldots,\allowbreak ty_N)$ for every $t\in \IC$. This is not surprising. The action of the multiplicative group of complex numbers with modulus 1 on the solutions of Equations (\ref{Eq:Zeroth}) implies that for every solution $(x_1,\ldots,x_M,y_1,\ldots,y_N)$ the set $\{(tx_1,\allowbreak \ldots,\allowbreak tx_M,\allowbreak ty_1,\allowbreak \ldots,\allowbreak ty_N) \mid t\in \IC \text{ and } |t|=1\}$ is contained in the set of solutions of Equations (\ref{Eq:Zeroth}) and so in the common zeroes of the $h_k$'s. But as the $h_k$'s are polynomials, their common zeroes must contain the Zariski closure of this set, namely $\{(tx_1,\ldots,tx_M,ty_1,\ldots,ty_N) \mid t\in \IC \}$.
\end{remark}

\begin{question}
Are the sets of common zeros of System (\ref{Eq:Zeroth}) and the symmetric system from Theorem \ref{thm:InvSystem} finite  up to symmetry?
\end{question}

Zero-dimensional common zero sets of polynomials are finite, meaning that we could potentially list all solutions. With this in mind, we break down the problem into subclasses by dehomogenizing. Specifically, choosing an order on the variables, say starting with $e_1^{(x)}$, we divide the system into two cases, when $e_1^{(x)}=0$ and when $e_1^{(x)}\neq 0$. By Remark \ref{remark:S1toCstar} if there is a solution with $e_1^{(x)}\neq 0$, then up to multiplying by a complex number we can assume that $e_1^{(x)}=1$. Thus we add the equation $e_1^{(x)}=1$ to the subsystem.   Next we consider the case where $e_1^{(x)}$ is zero. Setting a variable to zero does not break the homogeneity and so we can again dehomogenize by setting the second variable to 1, say $e_2^{(x)}=1$. We continue like this until we run out of variables or the equations become trivial.  
  
  For every solution in the elementary symmetric polynomials we find one corresponding solution in the $x_i$'s and $y_j$'s. This is done by solving the system obtained by plugging the solution in the $e_i^{(x)}$'s and $e_i^{(y)}$'s in the equations defining  the elementary symmetric polynomials 
	$e_i^{(x)} = \sum_{j_1<\ldots<j_i} x_{j_1} \cdot \ldots \cdot x_{j_i}$ 
	and 
	$e_i^{(y)} = \sum_{j_1<\ldots<j_i} y_{j_1} \cdot \ldots \cdot y_{j_i}$.

          What we need to do next is on the one hand to remove the arbitrary choices we made when dehomogenizing (for example assuming $e_1^{(x)}=1$), and on the other hand check if the common zero of the $h_k$'s we obtained is a solution of the original System (\ref{Eq:Zeroth}). We do both at the same time. For every solution in the $x_i$'s and $y_j$'s, we are looking for $\lambda\in\mathbb{C}$ such that $\lambda x_1, \ldots, \lambda y_N$ satisfy system~(\ref{Eq:Zeroth}). Since we can scale by any complex number of modulus 1, we can restrict this search to $\lambda$ real and positive. Supposing $x_i\neq 0$, we use the $i$th equation of system~(\ref{Eq:Zeroth})
        rewritten as
\begin{eqnarray}\label{Eq:xirewrite}
  \begin{aligned}
	\lambda^2 = \frac{1}{\overline{x_i}}\left(- \sum_{\substack{j \neq i \\ j=1}}^M \frac{1}{x_i-x_j} + \sum _{j=1}^N \frac{1}{x_i-y_j}\right).
	  \end{aligned}
\end{eqnarray}
This allows us to determine the value which could work. We use the remaining equations of system~(\ref{Eq:Zeroth}) to check if this works for all. \\

\noindent\textbf{Summary of solution procedure:}\\
\noindent We start with the equations $h_0, h_2, \ldots, h_{M+N-1}, hD-1$.\\
\noindent For $i$ from $1$ to $M$:

\begin{enumerate}

	\item We dehomogenize the system by imposing the extra conditions 
	$e_1^{(x)} = \ldots =  e_{i-1}^{(x)}=0$, $e_i^{(x)} = 1$. \label{step1}
	
	\item We find the (finitely many) solutions to the subsystem obtained in Step (\ref{step1}), for example using Maple PolynomialSystem function.
         
	\item For each solution in the elementary symmetric polynomials, we find one corresponding solution in the $x_i$'s and $y_j$'s by solving the polynomial systems obtained by substituting each solution in the definition of the elementary polynomials as described above.
          
	\item For every solution in the $x_i$'s and $y_j$'s, we check wether $\lambda x_1, \ldots, \lambda y_N$ satisfy system~(\ref{Eq:Zeroth})  for some $\lambda>0$.

\end{enumerate}

%
%

\section{Results and Benchmark}\label{Section:Results}

In this section, we present the solutions we have found and give details of the computations performed. 
The invariant equations described in the previous section are extremely long. We first attempted to write them in the software Macaulay2~\cite{M2}. This was,
however, extremely demanding in terms of processor time and memory requirements. The same operation was far more efficient using the software Maple~\cite{Maple}. One possible reason for this difference, is that Maple does not expand products of polynomial (in our case, the powers of $A$) unless required to do so. 

Standard numerical methods often require some initial guess and lead to one approximate solution. However, there can be multiple solutions to a given system of equations, and there is no guarantee that one will find all of them.
Given a system of polynomial equations, we use techniques developed in computational algebraic geometry and commutative algebra to compute \emph{all} the solutions. 
\begin{itemize}

	\item Symbolic AG: The most common symbolic method is based on the computation of a Gr\"obner basis for the system. Gr\"obner bases provide a systematic way to symbolically find the set of common zeroes of a system of polynomials.  Gr\"obner bases are (typically very long) lists of generators of the system of polynomial equations with good algebraic properties which can be understood as a multivariate generalization of Gaussian elimination. For more details see \cite{CoxLittleOShea}.

	\item Numerical AG: Numerical algebraic methods are based on a principle called ``homotopy continuation''. The system is put in a continuous deformation (a homotopy) to an appropriate ``known'' start system with similar properties. The solutions of the known system are tracked over $\mathbb{C}$ using homotopy continuation, which provides numerical approximations of all the distinct or isolated complex solutions of the original system, and these can be certified. By running homotopy continuation with appropriate generic homotopy parameter over $\mathbb{C}$, not $\mathbb{R}$, numerical algebraic geometry techniques with probability 1 find all solutions along the path \cite{coeffparam}. For details see \cite{SommeseWampler,Bertini}.

\end{itemize}
We attempted to compute the solutions of the vortex problem following both
of these approaches. As exact method, we used the function \texttt{PolynomialSystem} contained in the package \texttt{SolveTools} of the software Maple \cite{Maple}. 
We used Bertini \cite{Bertini}, an open numerical algebraic geometry software, which contains an implementation of homotopy continuation and numerically solves for all solutions. 

The table below compares the running time (in seconds) of the two approaches on the invariant system as well as on the systems described in Section~\ref{Section:direct}. The computation was performed on a standard office desktop (Intel i3 3.40GHz with 7.7GB ram).

\begin{table}
\begin{center}
    \begin{tabular}{ | p{4cm} || p{1.5cm} | p{1.5cm} | p{1.5cm} | p{3.5cm} |}
    \hline
    Method & $M=2$ & $M=2$  & $M=3$ & $M=3$  \\ 
     &  $N=1$ &  $N=2$ &  $N=1$ &$N=2$ \\\hline\hline
    Invariants equations with Maple & $1.6$ & $2.7$ & $2.2$ & $42.6$ \\ \hline
    Real and imaginary part with Maple & $0.2$ & $3.8$ & $1.7$ & ran out of memory \\ \hline
    Real and imaginary part with Bertini & $988$ & $>10000$ & $>10000$ & $>10000$ \\ \hline
    Conjugate variables with Maple & $0.4$ & $6.5$ & $2.4$ & ran out of memory \\ \hline
    Conjugate variables with Bertini & $2401$ & $>10000$ & $>10000$ & $>10000$ \\ \hline
    \end{tabular}
\end{center}
\caption{Running time in seconds of the two approaches on the invariant system as well as on the systems described in Subsection~\ref{Section:direct}. The computation was performed on a standard office desktop (Intel i3 3.40GHz with 7.7GB ram).}\label{table:benchmark}
\end{table}

\subsection{List of solutions}

The table below contains all the solutions of the system for different values of $M$ and $N$. For $M=3, N=3$ and for $M=4, N=2$ the computation of the main component ($e_1^{(x)}=1$) did not terminate. However we computed all solutions where at least one variable $e^{(x)}_i$ or $e^{(y)}_i$ equals zero. For all other values of $M$ and $N$ in the table below, the computation was completed, providing computational proof that no more solutions are present. 

\begin{table}
\begin{center}
    \begin{tabular}{ | p{1.10cm} || p{1.35cm} | p{1.35cm}| p{1.35cm} | p{1.35cm} | p{1.35cm}| p{1.35cm}| p{1.35cm}| p{1.35cm} |}
    \hline
     & $M=1$ & $M=2$ & $M=2$ & $M=3$ & $M=3$ & $M=4$ & $M=3$ & $M=4$  \\ 
     & $N=1$ & $N=1$ & $N=2$ & $N=1$ & $N=2$ & $N=1$ & $N=3$ & $N=2$ \\
 \hline\hline $|\text{sol's}|$ 
     & $1$   & $1$   & $2$   & $0$   & $1$   & $0$   & $\geq 2$ & $\geq 1$\\ \hline
    \end{tabular}
\end{center}
\caption{Number of solutions found for different values of $M$ and $N$. For $M=3, N=3$ and for $M=4, N=2$ there may be more solutions.}
\end{table}

The table below contains all the solutions found.

\begin{table}
\begin{center}
	\begin{tabular}{ | p{1.5cm} || p{5cm}  p{5cm} |}
	\hline
	$(1,1)$ &	$x_1 = -\sqrt{2}/2 \approx -0.707$ & $y_1 = \sqrt{2}/2 \approx -0.707$ \\ 
	\hline
	$(2,1)$ &	$x_1 = -\sqrt{2}/2 \approx -0.707$ & $y_1 = 0$ \\
			 &	$x_2 = \sqrt{2}/2 \approx -0.707$ &  \\ 
	\hline
	$(2,2)$ &	$x_1 = -\sqrt{2}/2 \approx -0.707$ & $y_1 = -\sqrt{2}/2 i \approx -0.707 i $ \\
		    &	$x_2 = \sqrt{2}/2 \approx 0.707$ & $y_2 = \sqrt{2}/2 i \approx 0.707 i$  \\ 
	\hline
	$(2,2)$ &	$x_1 = \frac{\sqrt{\sqrt{2}+\sqrt{1+\sqrt{2}}}}{2}\cdot\frac{\sqrt{2}}{(1+\sqrt{2})^{\frac{1}{4}}} \approx 0.977$ & $y_1 \approx -0.977 $ \\
		    &	$x_2 \approx -0.212$ & $y_2  \approx 0.212 $  \\ 
	\hline
	$(3,2)$ &	$x_1 \approx 0$ & $y_1 = \approx -0.366 $ \\
		    &	$x_2 \approx -0.930$ & $y_2  \approx 0.366 $  \\ 
		    &	$x_3 \approx 0.930$ &  \\
	\hline
	$(3,3)$ &	$x_1 \approx 0.707$ & $y_1 = \approx -0.707 $ \\
		    &	$x_2 \approx -0.354+0.612 i$ & $y_2  \approx 0.353+0.612 i$ \\ 
		    &	$x_3 \approx -0.354-0.612 i$ & $y_2  \approx 0.353-0.612 i$ \\
	\hline
	$(3,3)$ &	$x_1 \approx -0.476$ & $y_1 = \approx 0.476 $ \\
		    &	$x_2 \approx 0.162$ & $y_2  \approx -0.162 $  \\ 
		    &	$x_3 \approx 1.112$ & $x_3 \approx -1.112$ \\
    \hline
	$(4,2)$ &	$x_1 \approx -0.600 i$ & $y_1 = \approx -0.285 i $ \\
		    &	$x_2 \approx -0.241$ & $y_2  \approx 0.285 i $  \\ 
		    &	$x_3 \approx 0.241$ &  \\
		    &	$x_3 \approx 0.600 i$ &  \\
	\hline
    \end{tabular}
\end{center}
\caption{List of solutions found. The positions of the positively
  charged $x_i$ and negatively charged $y_i$ vortices are provided
  that solve Equations~(\ref{Eq:Zeroth}).}
\end{table}
\begin{figure}[p]
\begin{tabular}{c:c}
\begin{minipage}{0.3\textwidth}
(a)\qquad
	\begin{tikzpicture}[framed, x=.85cm,y=.85cm]
		\draw[-stealth, thin, draw=gray] (-1.5,0)--(1.5,0); 
		\draw[-stealth, thin, draw=gray] (0,-1.5)--(0,1.5); 
		\node [red] at (1,0) {\textbullet};
		\node [red] at (-1,0) {\textbullet};
		\node [blue] at (0,0) {\textbullet};
	\end{tikzpicture}
\end{minipage}
&
\begin{minipage}{0.55\textwidth}
(b)\qquad
	\begin{tikzpicture}[framed, x=.85cm,y=.85cm]
		\draw[-stealth, thin, draw=gray] (-1.5,0)--(1.5,0); 
		\draw[-stealth, thin, draw=gray] (0,-1.5)--(0,1.5); 
		\node [red] at (1,0) {\textbullet};
		\node [red] at (-1,0) {\textbullet};
		\node [blue] at (0,1) {\textbullet};
		\node [blue] at (0,-1) {\textbullet};
	\end{tikzpicture}
	\hspace{5pt}
	\begin{tikzpicture}[framed, x=.85cm,y=.85cm]
		\draw[-stealth, thin, draw=gray] (-1.5,0)--(1.5,0); 
		\draw[-stealth, thin, draw=gray] (0,-1.5)--(0,1.5); 
		\node [red] at (1,0) {\textbullet};
		\node [red] at (-.217,0) {\textbullet};
		\node [blue] at (-1,0) {\textbullet};
		\node [blue] at (.217,0) {\textbullet};
	\end{tikzpicture}
\end{minipage}
\\
&\\
\hdashline
&\\
\begin{minipage}{0.3\textwidth}
(c)\qquad
	\begin{tikzpicture}[framed, x=.85cm,y=.85cm]
		\draw[-stealth, thin, draw=gray] (-1.5,0)--(1.5,0); 
		\draw[-stealth, thin, draw=gray] (0,-1.5)--(0,1.5); 
		\node [red] at (1,0) {\textbullet};
		\node [red] at (-1,0) {\textbullet};
		\node [red] at (0,0) {\textbullet};
		\node [blue] at (.387,0) {\textbullet};
		\node [blue] at (-.387,0) {\textbullet};
	\end{tikzpicture}
\end{minipage}
&
\begin{minipage}{0.55\textwidth}
(d)\qquad
	\begin{tikzpicture}[framed, x=.85cm,y=.85cm]
		\draw[-stealth, thin, draw=gray] (-1.5,0)--(1.5,0); 
		\draw[-stealth, thin, draw=gray] (0,-1.5)--(0,1.5); 
		\node [red] at (1,0) {\textbullet};
		\node [red] at (-.5,-.866) {\textbullet};
		\node [red] at (-.5,.866) {\textbullet};
		\node [blue] at (-1,0) {\textbullet};
		\node [blue] at (.5,.866) {\textbullet};
		\node [blue] at (.5,-.866) {\textbullet};
	\end{tikzpicture}
	\hspace{5pt}
	\begin{tikzpicture}[framed, x=.85cm,y=.85cm]
		\draw[-stealth, thin, draw=gray] (-1.5,0)--(1.5,0); 
		\draw[-stealth, thin, draw=gray] (0,-1.5)--(0,1.5); 
		\node [red] at (1,0) {\textbullet};
		\node [red] at (-.427,0) {\textbullet};
		\node [red] at (.145,0) {\textbullet};
		\node [blue] at (-1,0) {\textbullet};
		\node [blue] at (.427,0) {\textbullet};
		\node [blue] at (-.145,-0) {\textbullet};
	\end{tikzpicture}
\end{minipage}
\\
&\\
\hdashline
&\\
\begin{minipage}{0.3\textwidth}
(e)\qquad
	\begin{tikzpicture}[framed, x=.85cm,y=.85cm]
		\draw[-stealth, thin, draw=gray] (-1.5,0)--(1.5,0); 
		\draw[-stealth, thin, draw=gray] (0,-1.5)--(0,1.5); 
		\node [red] at (1,0) {\textbullet};
		\node [red] at (-1,0) {\textbullet};
		\node [red] at (0,-.4) {\textbullet};
		\node [red] at (0,.4) {\textbullet};
		\node [blue] at (-.467,0) {\textbullet};
		\node [blue] at (.457,0) {\textbullet};
	\end{tikzpicture}
\end{minipage}
&
\end{tabular}
\caption[]{The solutions to the vortex problem for: $M=2$, $N=1$, (a); $M=2$, $N=2$, (b); $M=3$, $N=2$, (c); $M=3$, $N=3$, (d); $M=4$, $N=2$, (e). }
\end{figure}
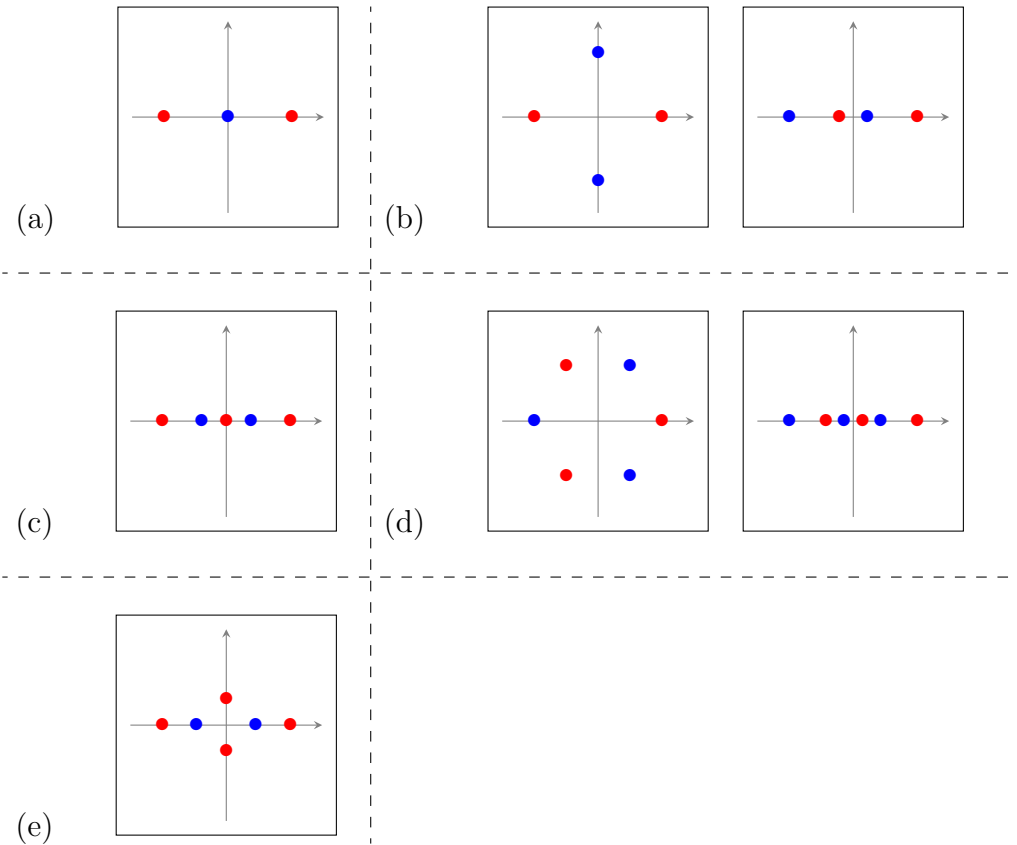

\section{Connection to the GP PDE Results}\label{Section:BacktoPDE}

Many of the above obtained configurations have been
previously identified at the level of the GP equation. In particular,
for instance, the vortex dipoles ($M=N=1$)
have emerged as the lowest
order configuration that destabilizes a planar dark soliton
state~\cite{komi,midde,cluster} and have also been obtained
experimentally via different techniques~\cite{NeelyEtAl10,dshall1},
enabling the observation of their precessional dynamics.
Importantly, in~\cite{dshall1}, the stationary form of the configuration
directly related to the considerations herein, was also experimentally
identified.
Furthermore, in some of these works~\cite{midde,cluster}, it
was argued that the aligned configurations of the tripole
with $M=2$, $N=1$ (which was also observed
experimentally in~\cite{bagno}), the aligned quadrupole with $M=2$, $N=2$,
then the aligned states with $M=3$, $N=2$, as well as that with $M=3$, $N=3$
(and so on) are all byproducts of subsequent progressive
further destabilizations of the dark soliton stripe. That is,
for such a stripe~\cite{midde}, each additional destabilization
produces a stationary configuration with one additional vortex
along the former dark line soliton. This is an intriguing cascade
of bifurcations from the stripe which explains the emergence
of aligned alternating charge vortex configurations, each with
one additional charge with respect to the previous one. Each
of these arises through a (supercritical) pitchfork bifurcation
which, in turn, justifies that each of these has an additional unstable
eigendirection with respect to the previous one.
Consequently, the vortex dipole is the most robust among these
configurations, bearing no real eigenvalues (and no exponential
instabilities), but only an internal, potentially resonant via a Hamiltonian
Hopf bifurcation, mode in the system. Then the tripole would bear
one exponentially unstable eigendirection, the aligned quadrupole
two such, and so on.

It is important here to highlight that some of
the early existence and even stability results on the subject
were obtained in the works of~\cite{dumi,motto1,motto2}.
In these works, in addition to some of the aligned configurations,
including the dipole and tripole, the first example of a canonical
polygon of alternating vortices, namely the quadrupole was
identified. It was, in fact, found that this configuration too
did not bear any exponential instabilities but could become
unstable through an oscillatory instability. The work of~\cite{cluster}
offered a more systematic viewpoint on these polygonal configurations
(see also~\cite{barry1}). There, it was found that these states
too were a result of the destabilization of a dark solitonic stripe,
but this time a radial one, the so-called ring dark soliton or RDS
configuration
(first proposed in the BEC context in~\cite{RDS}). In particular,
as soon as this state emerges (in the linear limit of the system)
it is degenerate with the vortex quadrupole. Then, its next
(further) destabilizing bifurcation gives birth to a vortex
hexagon, the subsequent one to a vortex octagon, then to a decagon
and so on. All of these lead to canonical polygons involving
alternating pairs of vortices, each of which has one more (again)
unstable eigendirection to the previous one, i.e., the hexagon
is generically unstable due to pairs of eigenvalues
emerging as a result of the destabilization
of the RDS. Moreover, the method
of generating functions was used to illustrate that
states with $M=N$, can be used to construct polygons of
angle $\phi=\pi/N$ (at a fixed radius) between the
alternating charges.

To give a canonical example in the context of configurations
considered herein, we briefly refer to the case of the hexagon.
In Figure~\ref{fign1}, we provide a typical scenario involving
the case of $\mu=2$ and $\Omega=0.1$. We consider the
different layers of approximation, starting with Equation~(\ref{aux1}),
which is the one also tackled via our algebraic techniques.
At that level, as is established in~\cite{barry2} (and also
found herein) the positions of the vortices are cube roots
of unity for both the positive and negative charges,
displaced by $\pi/3$ with respect to each other. The radius
of the solutions, as shown also in Table 3 (for the complex
(3,3) roots) is $\sqrt{2}/2 \approx 0.707$. This radius
is given in units of the Thomas-Fermi radius $R_{TF}=\sqrt{2 \mu}/\Omega$.
As seen in Figure~\ref{fign1}, the realistic radius is closer to $0.35 R_{TF}$
in the full numerical (PDE) computations. This difference is reflected
by the more accurate nature of Equations~(\ref{aux3}) and~(\ref{aux4}).
It is worthwhile to note that given the equidistant
from the origin nature of this configuration as regards
the vortices, in this case these two equations
[(\ref{aux3}) and~(\ref{aux4})] yield the same prediction.
For both of them, the equilibrium radius is found to
be $R^2=(2 \omega_{pr}(0)+R_{TF}^{-2})^{-1}$. For the parameters
above, $R=7.564=0.378 R_{TF}$; notice that this is very close
to the numerical result, the difference being justified by the
deviation of the above $\mu=2$ scenario from the Thomas-Fermi
limit of large values of $\mu$. Nevertheless, it is clear that the
qualitative picture is accurate in all the effective particle
descriptions and that the improved models can yield an even
quantitatively accurate characterization. It is relevant to note
that the stability results of Figure~\ref{fign1} indicate that this
is an unstable configuration due to two nearly identical pairs of
real eigenvalues, suggesting an exponential growth of perturbations
along the corresponding eigendirections.

All of the above configurations have also been summarized in the
compendium of~\cite{siambook} and it is interesting to note that
they include {\it all} the configurations that we have
obtained in the present work {\it except for} the $M=4$,
$N=2$ state of Figure~5. 
It is thus the latter that we now turn our
attention to more systematically, as it is unprecedented in earlier
both existence and stability
studies, to the best of our knowledge. This configuration
consists of 4 plus and 2 minus (or vice versa) charged vortices
with the inner ones constituting a quadrupole
--with slightly unequal distances from the origin along the
two axes--,
while the last
two are aligned with one of the axes and oppositely
charged to the rest of the vortices along the same line.
A typical example of this configuration was obtained and is
shown in Figure~\ref{fign2}. Importantly, in this example
our numerical observations once again bear some difference
in comparison to the
prediction of Table 3. In particular, numerically we find
the vortices to be located as follows. The $+1$ charges are
located at $x_{1,2}=\pm 0.1875 R_{TF} i$ (along the
imaginary axis) and at $x_{3,4}=0.3875 R_{TF}$,
while the $-1$ charges are at $y_{1,2}=\pm 0.15 R_{TF}$ (cf. with the
values in Table 3). The latter four vortices are located
on the real axis, although clearly the entire configuration
is freely amenable to azimuthal rotations. 
It is for this reason that we now
resort to the progressively more accurate representations
of Equations~(\ref{aux3}) [which accounts for the radial dependence
  of the vortex precession frequency] and then Equations~(\ref{aux4})
[which additionally incorporates  the effect of the inhomogeneity of the
  background in the inter-vortex interaction]. The former yields a prediction
of $x_{1,2}=\pm 0.158 R_{TF} i$ and $x_{3,4}=\pm 0.331 R_{TF}$
while $y_{1,2}=\pm 0.134 R_{TF}$. Finally, the most
accurate available description of Equations~(\ref{aux4})
leads to the following numbers $x_{1,2}=\pm 0.179 R_{TF} i$,
$x_{3,4}=0.413 R_{TF}$, while $y_{1,2}=\pm 0.154 R_{TF}$. This latter
description is the most accurate one --to the best of
our knowledge-- that is obtained by a particle model, being
limited only by the deviation from the Thomas-Fermi limit. That is,
the prediction would only be better for larger values of $\mu$.
Nevertheless, the conclusion that we reach
is that the configuration predicted
by the computer algebra techniques is qualitatively consonant with
the configuration identified in the full system. Nevertheless,
the more elaborate (and more accurate) models such as ultimately
that of Equations~(\ref{aux4}) are needed in order to most adequately
capture the quantitative specifics of the vortex locations.
In that light, the tools developed herein can be useful in
unraveling configurations possibly with quite limited symmetry
characteristics which may not be easily identifiable differently.
It should also be added that we have explored the dynamical
stability of this configuration and have found it to be
dynamically unstable, as shown in Figure~\ref{fign2}. In particular,
as can be observed in the figure, it bears two pairs of
real eigenvalues and a complex eigenvalue quartet with nontrivial
growth rates. The former lead to an exponential growth along the
respective eigendirections, while the latter corresponds
to an oscillatory growth due to the complex nature of the eigenvalues.

\begin{figure}[ht]
\centerline{
\includegraphics[height=4.1cm]{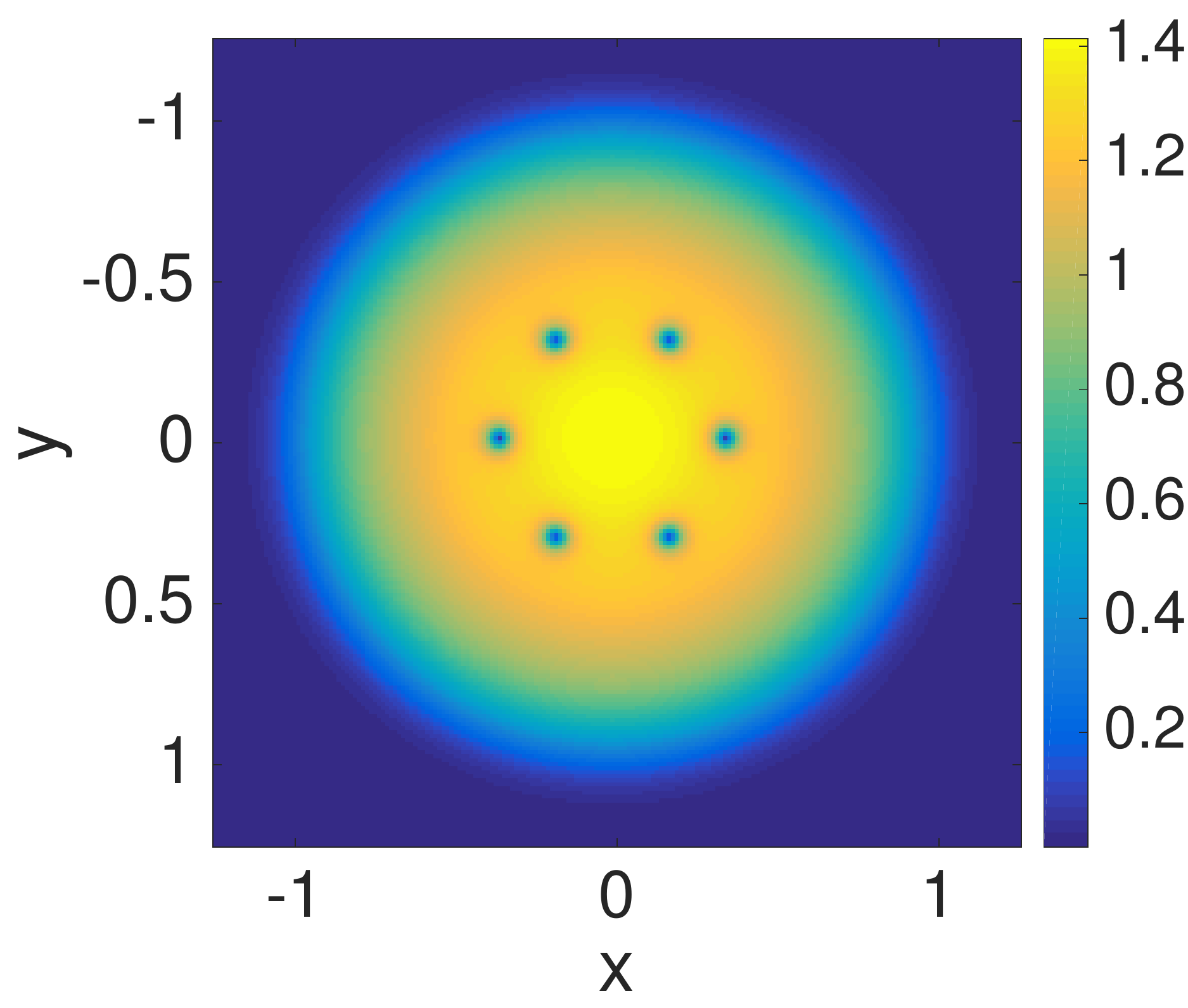}
\includegraphics[height=4.1cm]{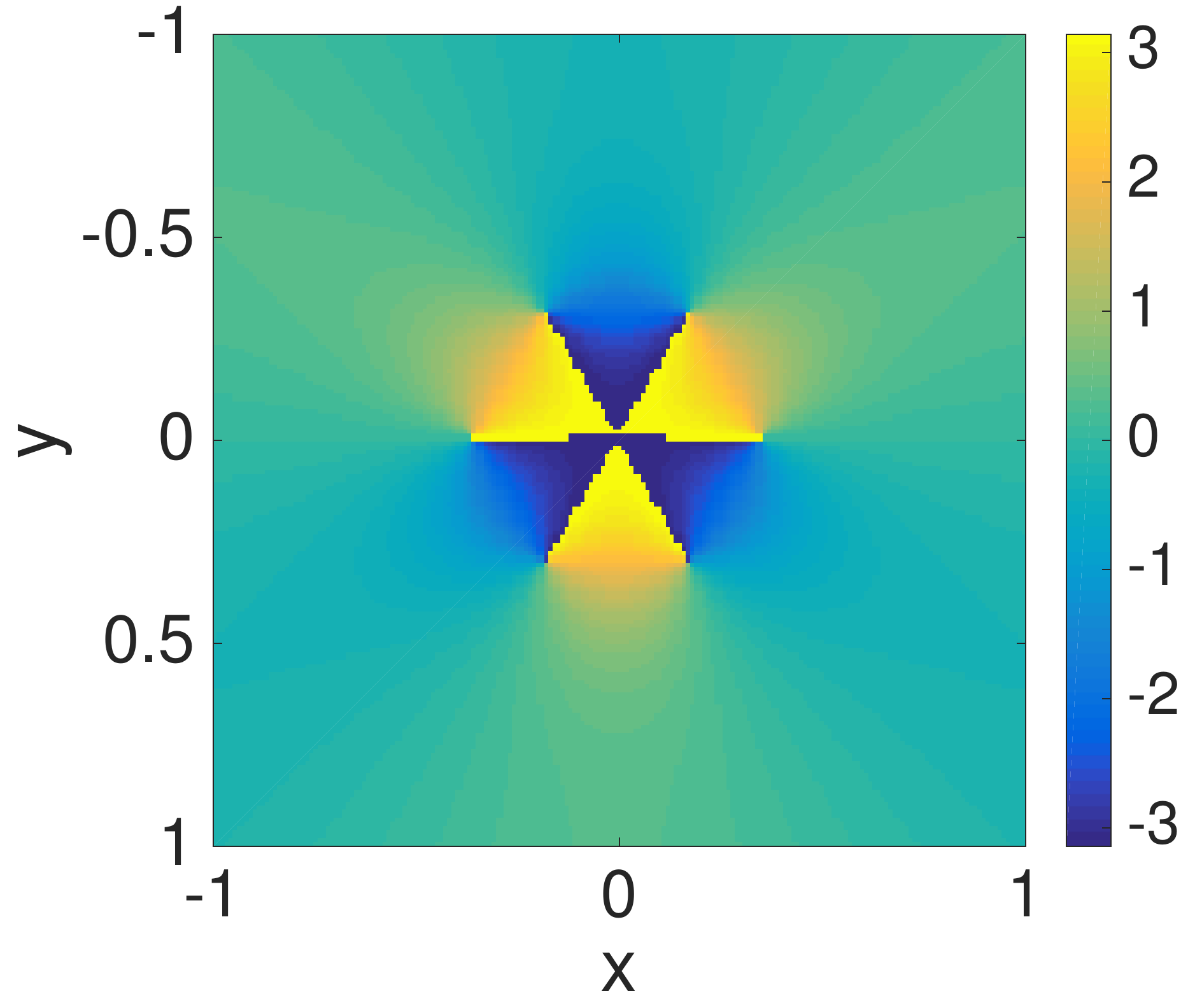}
\includegraphics[height=4.1cm]{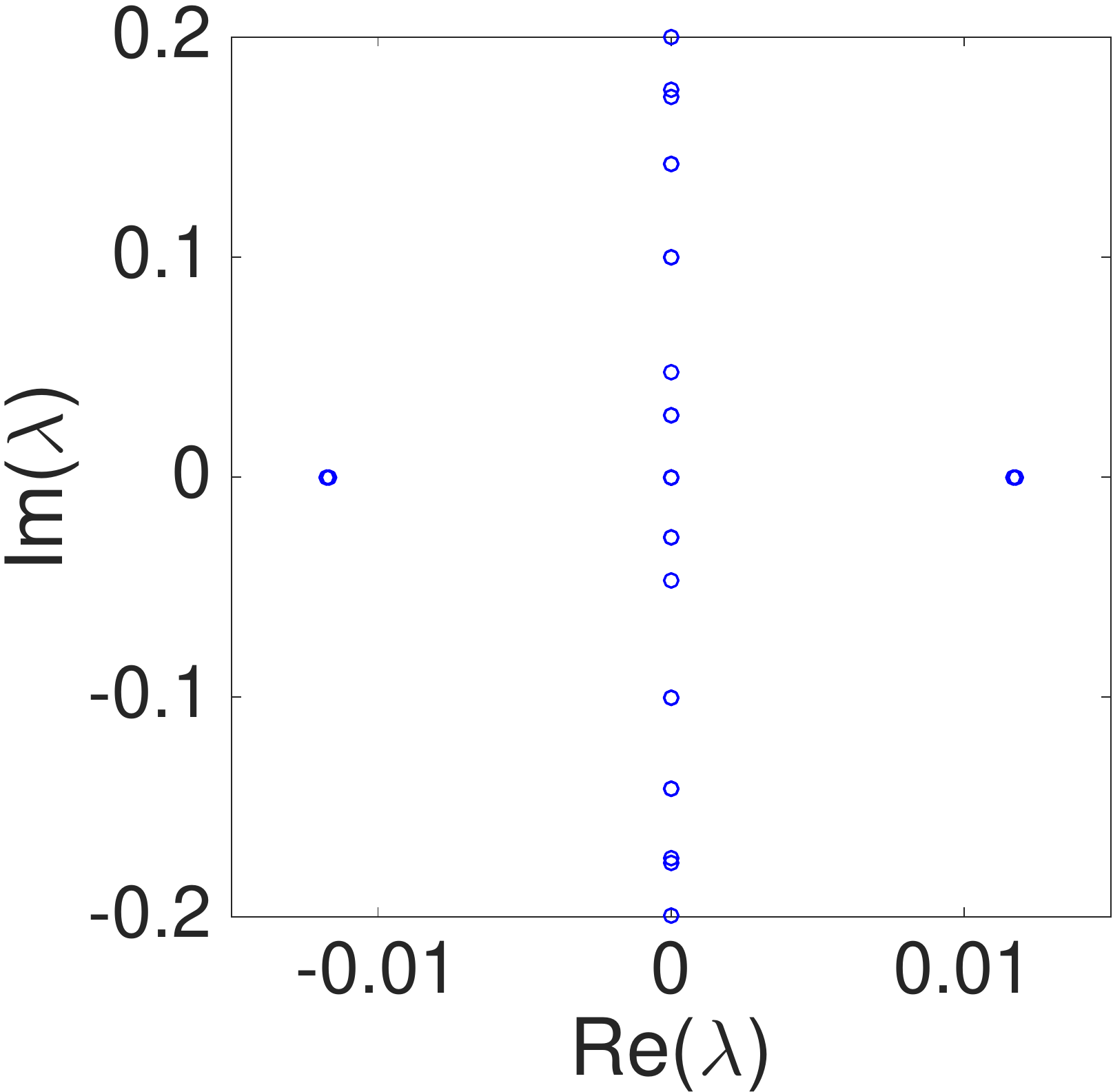}
}
\caption[]{The first panel shows the density field $|u|^2$ from
  a PDE computation of Equation~(\ref{GPE}) involving $M=N=3$ vortices
  in a hexagonal configuration. This is shown by a two-dimensional
  contour plot in the $(x,y)$ plane.
  The second panel illustrates
  the corresponding phase, revealing the alternating nature
  of the charges. Finally, the third panel illustrates the
  results of the linear stability analysis 
  around such a configuration. The spectral plane $(\lambda_r,\lambda_i)$
  of the associated
  linearization is illustrated for the corresponding eigenvalues
  $\lambda=\lambda_r + i \lambda_i$. The presence of two pairs of
  purely real eigenvalues establishes the exponential instability
  of such a solution.
}
\label{fign1}
\end{figure}

 \begin{figure}[ht]
\centerline{
\includegraphics[height=4.1cm]{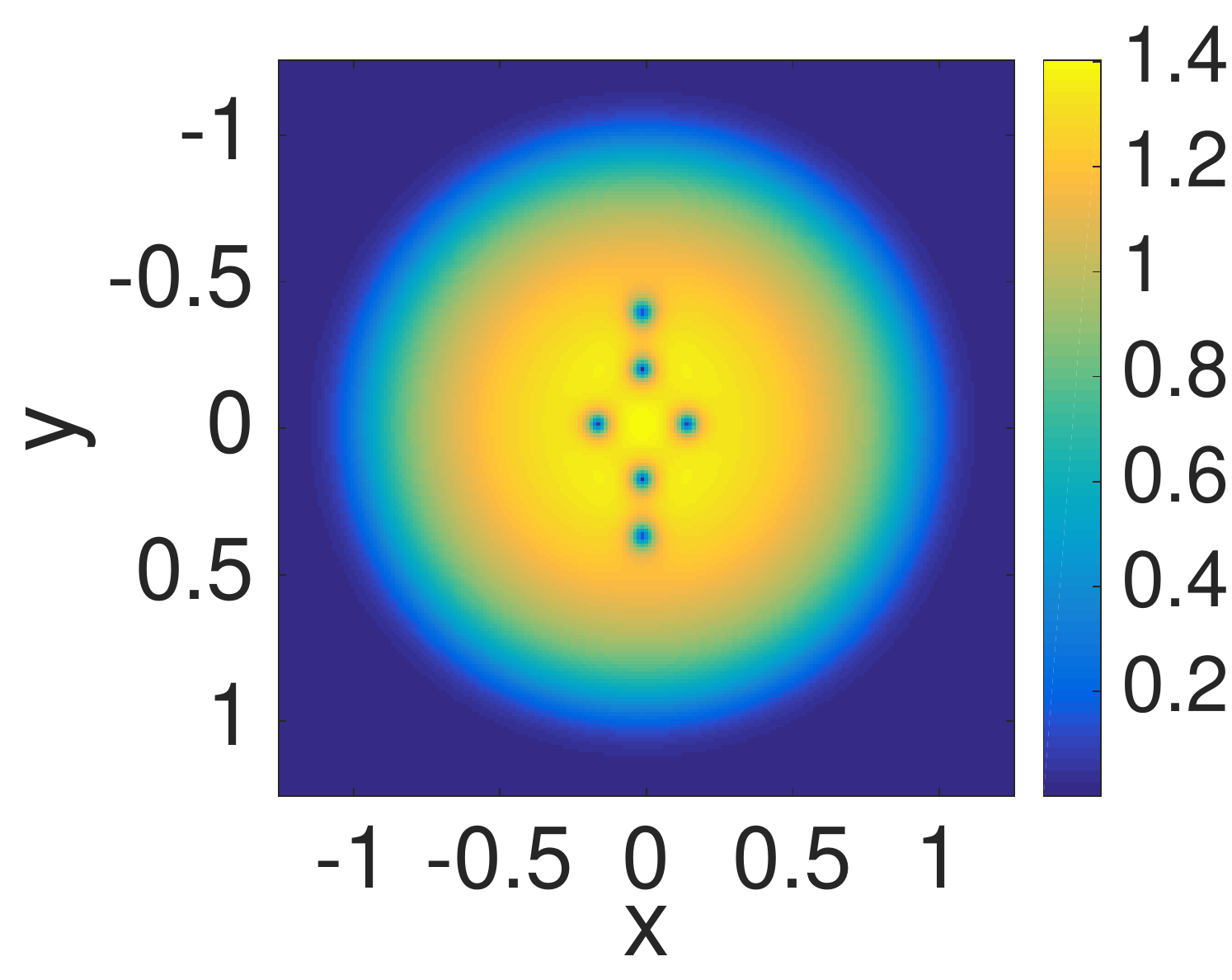}
\includegraphics[height=4.1cm]{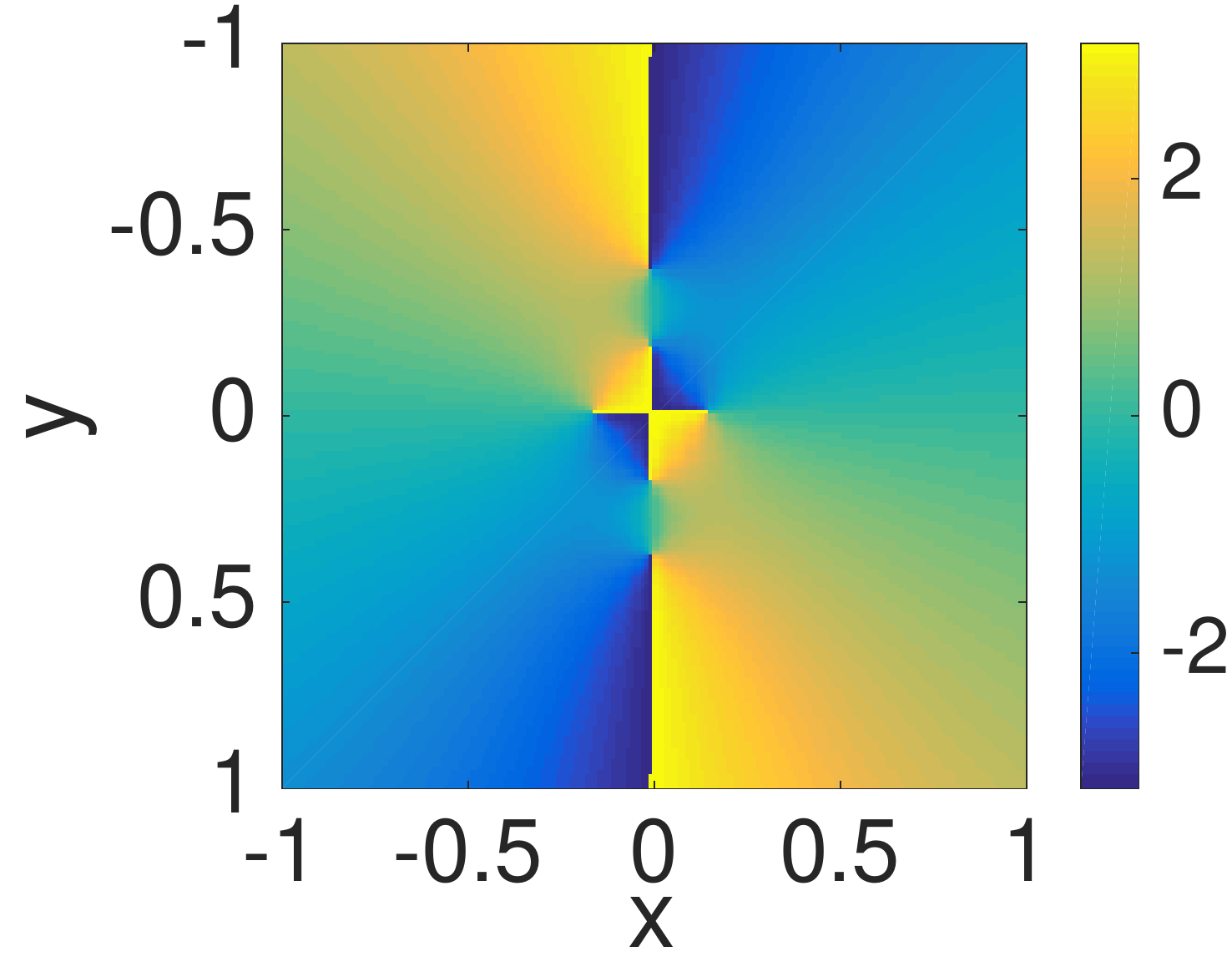}
\includegraphics[height=4.1cm]{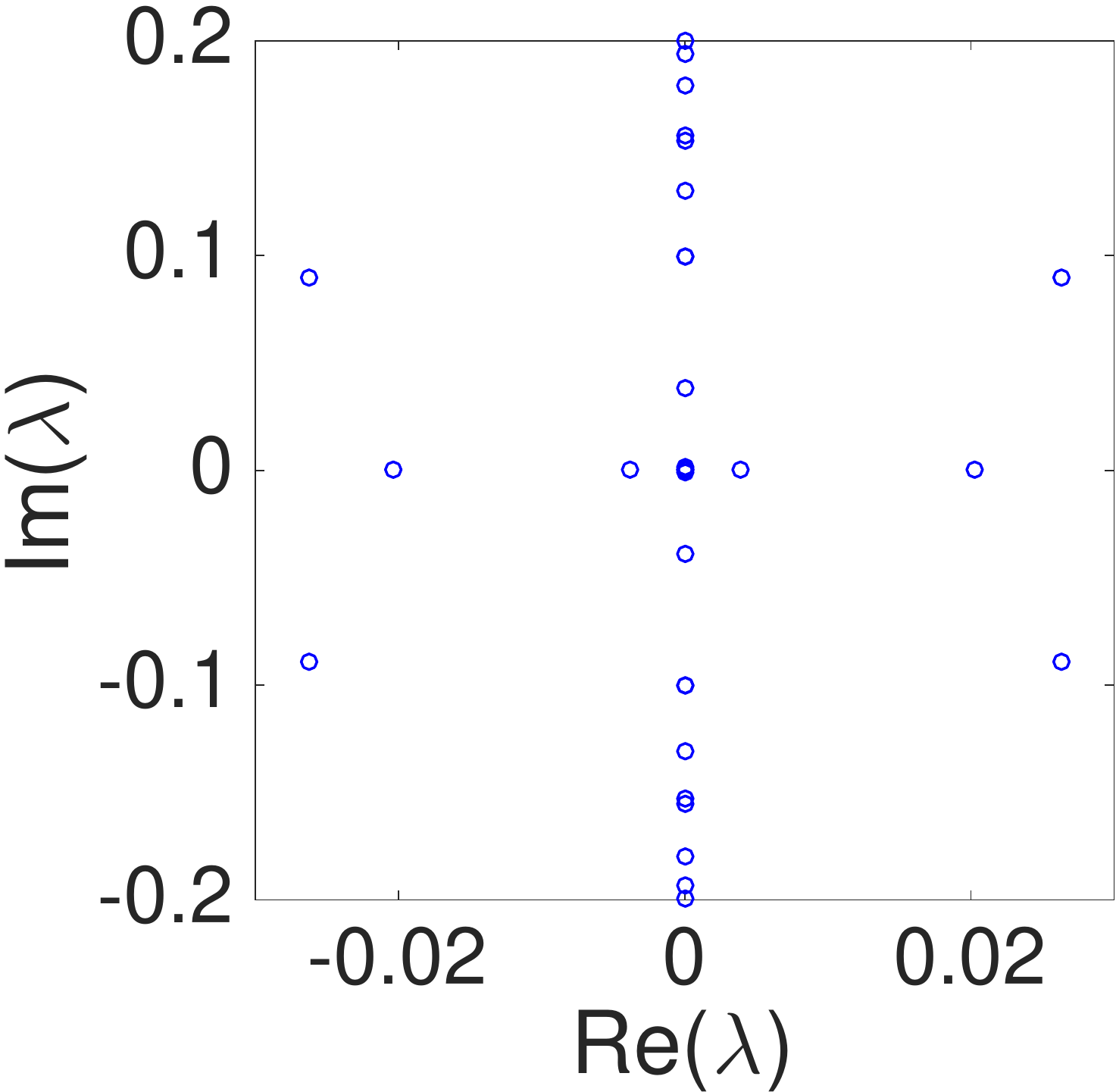}
}
\caption[]{Same as the previous configuration, but now for the
  principal configuration discovered herein, namely the
  stationary solutions involving $M=4$ positive and $N=2$
  negative charges. For details regarding the positions of
  the vortices and the comparison with the corresponding theoretical
  prediction, see the text. The last panel showcases the
  instability of this newly established configuration, by virtue
  of showing its two pairs of real eigenvalues (exponential
  instabilities) and one pair of complex eigenvalues
  (oscillatory instability).}
\label{fign2}
\end{figure}

\section{Conclusions \& Future Challenges}

In the present work, we have made an attempt to bring to bear
tools from the theory of Gr{\"o}bner bases and associated
computational algebra to the case of a problem involving
stationary configurations of oppositely charged 
vortices in atomic Bose-Einstein condensates. More specifically,
we have started from the corresponding PDE system (of the Gross-Pitaevskii
type) and have discussed different layers of reduction approximations
characterizing the dynamics of the vortices. The first layer is a
quasi-homogeneous one, the next involves the dependence of the precession
frequency of a single vortex on the distance, while the most elaborate
one also accounts for the inhomogeneity of the background in affecting
inter-vortex interactions. For computational simplicity reasons, we
have utilized the simplest one of these descriptions and deduced from
its corresponding steady state problem, a set of equations in the
elementary symmetric polynomials in the variables. We have brought
to bear computational algebra packages that have enabled us, in this
adapted formulation, to find {\it all} possible stationary
configurations involving
up to $5$ vortices of combined positive and negative charges and some
prototypical stationary configurations involving $6$ vortices.
Most configurations among these have been already obtained in the
literature of BECs, most notably the configurations with high
symmetry (axial ones with all the vortices on a line, or polygonal
ones with them sitting at the vertices of a canonical polygon).
We have discussed in some further detail one of these cases, namely
a canonical hexagon, consisting of 3 plus and 3 minus charged vortices.
{\it Nevertheless}, already in the case of $6$ charges, we have
presented an unprecedented --to the best of our knowledge--
configuration, namely one with 4 positive
and 2 negative charges (or vice versa). 
We have
studied such a configuration at the level of our different layers
of ODE approximation in comparison with computations of the original
PDE. In all the cases considered, while admittedly we have utilized
(for computational simplicity) the computer algebra package in the simplest
setting of Equation~(\ref{aux1}), we have found that the identified
configurations in {\it all} cases, persist in the full PDE problem.
Additionally, we have shown how the more adequate (but at the same
time more complex) polynomial equations of the models of Equations~(\ref{aux3})
and~(\ref{aux4}) can then facilitate a more accurate capturing
of the precise vortex locations in connection with the full PDE problem,
providing in this way a more definitive characterization of the states.

We believe that this larger scale program has numerous directions
for potential further development. A natural question concerns
the limitations of this effort regarding the attempt to seek
all the possible configurations with higher
numbers of charges. Presently, this task seems somewhat
limited by computational capabilities, but it seems reasonably
likely that advances in either the algorithmic developments or the
computational hardware may be able in the near future to
circumvent this issue and offer us an unprecedented ability
to obtain {\it all} stationary vortex configurations of
higher numbers of charges. A complementary effort may be developed
in the direction of bringing to bear similar algebraic techniques
but for the more complex and thus more cumbersome systems, such as
those of Equations~(\ref{aux3}) and especially so Equations~(\ref{aux4}).
Finally, there are numerous additional directions where one can
extend present considerations. While a large vein of potential
work can be opened by considering three-dimensional settings, we
limit our considerations to the 2d case, but involving potentially
traveling configurations. There exist works such as those
of~\cite{kadtke} and more recently~\cite{wei} which have discussed
intriguing algebraic connections including those with the so-called
Adler-Moser polynomials (see also references therein). Nevertheless
one can envision important, physically relevant variations where
the vortices are confined in one direction in the plane, while traveling
in the other direction. There, it is conceivable that the nice
algebraic structure of the Adler-Moser polynomials disappears, yet
a computer algebra characterization of steadily propagating solutions
may well be possible. These different directions are
currently under consideration and progress along them will be
reported in future publications.


\vspace{5mm}

{\it Acknowledgements.} Initial work on this project was funded by the John Fell Oxford University Press (OUP) Research Fund. H.A.H. acknowledges support from a Royal Society University Research Fellowship. H.A.H. thanks J. Hauenstein for useful discussions about numerical algebraic geometry. E.D.  and P.T. were supported by the University of Nottingham via a Anne McLaren Fellowship.
P.G.K. acknowledges useful discussions on
the subject with R. Carretero. This material is based upon work supported 
by the National Science Foundation under Grant No. DMS-1602994 (P.G.K.).

\bibliographystyle{plain}
\bibliography{vortices_refs}

\end{document}